
\documentclass[preprint,10pt]{elsarticle}




\usepackage{amsmath,amsthm}
\usepackage{amsfonts,amssymb}
\usepackage{CJK}
\usepackage{multicol}
\usepackage{multirow}
\usepackage{diagbox}
\usepackage{graphicx}
\usepackage{float}
\usepackage{subfigure}  

\makeatletter
\newenvironment{tablehere}
  {\def\@captype{table}}
 {}

\newenvironment{figurehere}
 {\def\@captype{figure}}
 {}
\makeatother

\usepackage{lineno} 
\allowdisplaybreaks 
\usepackage{epstopdf} 
\usepackage{pgf,tikz}
\usepackage{tikz}
\usetikzlibrary{arrows,shapes,snakes,shadows,positioning,automata,patterns}
\usetikzlibrary{trees,decorations.pathmorphing,decorations.markings}
\usetikzlibrary{backgrounds}

\usepackage[
            pdfstartview=FitH,
            CJKbookmarks=true,
            bookmarksnumbered=true,
            bookmarksopen=true,
            colorlinks, 
            pdfborder=001,   
            linkcolor=blue,
            anchorcolor=blue,
            citecolor=blue
            ]{hyperref}

\newtheorem{theorem}{Theorem}
\newtheorem{assumption}{Assumption}

\newtheorem{lemma}{Lemma}

\newtheorem{definition}{Definition}

\newtheorem{remark}{Remark}






\textwidth=18cm
\textheight=22.5cm
\topmargin=-1.2cm

\evensidemargin=-1cm
\oddsidemargin=-0.5cm

\journal{Automatica}

\begin{document}

\begin{frontmatter}



\title{Distributed Nash Equilibrium Seeking for Noncooperative Games of High-Order Nonlinear Multi-Agent Systems Over Weight-Unbalanced Digraphs
\tnoteref{label1}}
\tnotetext[label1]{This work was supported by the National Natural Science Foundation of China under Grant 61803385 and the Hunan Provincial Natural Science Foundation of China under Grant 2019JJ50754.}


\author[1]{Zhenhua Deng\fnref{label2}}
\ead{zhdeng@amss.ac.cn}
\author[1]{Jin Luo}
\ead{luoboof@csu.edu.cn}
\fntext[label2]{Corresponding author, Tel. +8613657357739}

\address[1]{School of Automation, Central South University, Changsha, 410075, China}

\begin{abstract}
In this paper, we investigate the noncooperative games of multi-agent systems.
Different from  existing noncooperative games, our formulation involves the high-order nonlinear dynamics of players, and the communication topologies among players are weight-unbalanced digraphs.
Due to the high-order nonlinear dynamics and the weight-unbalanced digraphs, existing Nash equilibrium seeking algorithms cannot solve our problem. In order to seek the Nash equilibrium of the noncooperative games, we propose two distributed algorithms based on state feedback and output feedback, respectively. Moreover, we analyze the convergence of the two algorithms with the help of variational analysis and Lyapunov stability theory. By the two algorithms, the high-order nonlinear players exponentially converge to the Nash equilibrium.
Finally, two simulation examples illustrate the effectiveness of the algorithms.
\end{abstract}

\begin{keyword}
Noncooperative games, Nash equilibrium, multi-agent systems, high-order
nonlinear systems, cyber-physical systems, weight-unbalanced digraphs.
\end{keyword}

\end{frontmatter}


\begin{multicols}{2}
\section{Introduction}
%
%
%
%
Noncooperative games arise widely in varieties of fields, such as economic markets, smart grids, communication networks, and social networks (see
 \cite{Jensen2010AggregativeGA,Gharesifard2016PriceBasedCA,Deng2021DistributedSM,Barrera2015DynamicIF,Lou2016NashEC,Ghaderi2014OpinionDI}). 
In noncooperative games,
each player aims to selfishly minimize its own cost function, depending on its own decisions and the decisions of other players
(see  \cite{Frihauf2012NashES,Facchinei2010GeneralizedNE,Li2020DistributedGN,Koshal2016DistributedAF}).
Consequently, Nash equilibrium seeking is a critical problem in noncooperative games, which has received more and more attention recently
(see \cite{Deng2019DistributedAF,ZENG2019GNE,Gadjov2019APA,Lu2019DistributedAF}).

In the past few years, numbers of Nash equilibrium seeking algorithms have been proposed for noncooperative games.
For example,
\cite{Ye2017DistributedNE} presented consensus-based  seeking algorithms for unconstrained noncooperative games.
\cite{Gadjov2019APA} designed projection-based  seeking algorithms for noncooperative games with convex set constraints.
\cite{ZENG2019GNE} proposed primal-dual based  seeking algorithms for noncooperative games with inequality constraints.
\cite{Lu2019DistributedAF} developed gradient-based  seeking algorithms for noncooperative games with differentiable cost functions.
\cite{Deng2021nonsmooth} exploited subgradient-based  seeking algorithms for noncooperative games with nondifferentiable cost functions.
\cite{Zhang2020DistributedNE} proposed internal model-based  seeking algorithms for disturbed noncooperative games.

Communication topologies among players have a significant influence on the design and analysis of distributed algorithms.
Most of existing  distributed Nash equilibrium seeking algorithms for noncooperative games rely on undirected graphs or weight-balanced digraphs, such as  \cite{Frihauf2012NashES, Li2020DistributedGN,Koshal2016DistributedAF,Deng2019DistributedAF,ZENG2019GNE,Gadjov2019APA,Lu2019DistributedAF, Ye2017DistributedNE,Deng2021nonsmooth,Zhang2020DistributedNE}.
However, an algorithm may be ineffective over weight-unbalanced digraphs, even though it is convergent over undirected graphs or weight-balanced digraphs (see \cite{Zhu2019ContinuousTimeCA}).
Moreover, it is well-known that weight-unbalanced digraphs are the extensions of undirected graphs and weight-balanced digraphs, which have wider applications and are easier to implement in engineering practices.
Consequently,  it is necessary to investigate the noncooperative games over weight-unbalanced digraphs.

As an integration of computation, communication, and physical processes, cyber-physical systems (CPSs) occur commonly in many engineering practices,
such as transportation systems, power systems, and mobile sensor networks (see \cite{Kim2012CyberPhysicalSA,Zhang2018DCR,
Zhang2015DistributedPA}).
In the background of CPSs,  physical systems are used to autonomously perform distributed tasks, and hence
increasing attention has been paid to distributed algorithms combined with the dynamics of physical systems.
For instance,
\cite{Wang2021DOC,Zhang2017DistributedOC} developed distributed optimization algorithms for Euler-Lagrange systems.
\cite{Wang2020DistributedPA,Deng2021DAD} proposed distributed resource allocation algorithms for second-order and high-order linear systems, respectively.
\cite{Deng2021DistributedEA,Romano2020DynamicNS} designed distributed Nash equilibrium seeking algorithms for second-order nonlinear systems  and high-order linear systems, respectively.
On the other hand,  high-order nonlinear systems can characterize many real physical systems, such as jerk systems \cite{Eichhorn1998TransformationsON}, synchronous generators \cite{2002Nonlinear}, vehicles \cite{2006Vehicle},
and turbine generators \cite{Fang2011BacksteppingbasedNA}, which extend first-order and second-order nonlinear systems, and high-order linear systems.
Therefore, high-order nonlinear systems have been universally studied.
For example,
\cite{Li2017OFD,Rezaee2019StationaryCC} investigated containment problems and consensus problems of high-order nonlinear systems, respectively.
However, to the best of our knowledge,
there are no results about noncooperative games of high-order nonlinear multi-agent systems over weight-unbalanced digraphs.
Furthermore, without furthering involving the control of high-order nonlinear dynamics and the weight-unbalanced digraphs,
existing Nash equilibrium seeking algorithms for noncooperative games, including  \cite{Deng2021DistributedSM,Frihauf2012NashES,Li2020DistributedGN,Koshal2016DistributedAF,Deng2019DistributedAF,ZENG2019GNE,Gadjov2019APA,Lu2019DistributedAF,Ye2017DistributedNE,Deng2021nonsmooth,Zhang2020DistributedNE,Deng2021DistributedEA,Romano2020DynamicNS},   can not solve the problem.

The objective of this paper is to investigate the noncooperative games of high-order nonlinear multi-agent systems over weight-unbalanced digraphs, and design distributed algorithms to seek the Nash equilibrium of the noncooperative games.
The main contributions of this paper are summarized as follows.

\begin{itemize}
	\item [(i)] This paper studies a noncooperative game with high-order nonlinear players over weight-unbalanced digraphs.
	The formulation extends the well-known noncooperative games, such as
	\cite{Deng2019DistributedAF,Zhang2020DistributedNE,Deng2021DistributedEA,Romano2020DynamicNS}, by adding the high-order nonlinear dynamics of players.
	Moreover, in our problem, the communication networks among players are weight-unbalanced digraphs, which extend undirected graphs and weight-balanced digraphs required by many results, such as \cite{Frihauf2012NashES,Li2020DistributedGN,Koshal2016DistributedAF,Deng2019DistributedAF,ZENG2019GNE,Gadjov2019APA,Lu2019DistributedAF,Ye2017DistributedNE,Deng2021nonsmooth,Zhang2020DistributedNE}.
	Owing to the high-order nonlinear dynamics and the weight-unbalanced digraphs, existing Nash equilibrium seeking algorithms, such as \cite{Deng2021DistributedSM,Lou2016NashEC,Frihauf2012NashES,Li2020DistributedGN,Koshal2016DistributedAF,Deng2019DistributedAF,ZENG2019GNE,Gadjov2019APA,Lu2019DistributedAF,Ye2017DistributedNE,Deng2021nonsmooth,Zhang2020DistributedNE,Deng2021DistributedEA,Romano2020DynamicNS}, cannot be applied to this problem.
	
	\item [(ii) ]The distributed algorithm design and analysis of our problem face the following challenges.
	(i) The dynamics of players are high-order and nonlinear, and the cost functions of players are also nonlinear. Hence, the whole closed-loop system is a complicated high-order nonlinear system;
	(ii) The convergence analysis of existing Nash equilibrium seeking algorithms over undirected graphs or weight-balanced digraphs depend on the conditions that $1_N^TL=0_N$, $L^T=L$, and the positive semi-definition of $L$ or $L+L^T$, while the Laplacian matrix $L$ of weight-unbalanced digraphs is nonsymmetric and nonpositive definite.
	\item [(iii)] This paper proposes a distributed state-based algorithm and a distributed output-based algorithm for the high-order nonlinear players to seek the Nash equilibrium of noncooperative games.
    Moreover, this paper analyzes the convergence of the two algorithms.
	Under the two algorithms, the high-order nonlinear players exponentially converge to the Nash equilibrium, even the high-order nonlinear dynamics contain uncertain parameters.
	Furthermore, the distributed output-based algorithm only depends on the output variables of the high-order nonlinear players rather than all  state infromation.
\end{itemize}


This paper is organized as follows.
Section \ref{sec.basis} introduces preliminaries, and formulates the problem.
Section \ref{sec.r} presents two distributed Nash equilibrium seeking algorithms, and analyzes their convergence.
Section \ref{sec.s} provides two examples to verify the algorithms.
Finally, Section \ref{sec.c} summarizes the conclusion.

\section{Preliminaries and Formulation}\label{sec.basis}

This section first presents some preliminaries, and then formulates the studied problem.
\subsection{Preliminaries}

\textsl{Notations.}
$\mathbb{R}$ and  $\mathbb{R}^{n}$ denote the set of real numbers and the $n$-dimensional Euclidean space, respectively.
$\mathbb{Z}_{+}$ is the set of nonnegative integers.
$\otimes$ represents the Kronecker product.
$\Vert A  \Vert $  and $\Vert x  \Vert $  are the spectral norm of matrix A and the standard Euclidean norm of vector $x$, respectively.
$x^T$ is the transpose of $x$.
$col(x_1,\dots,x_n)=[x_1^T,\dots,x_n^T]^T$ with $x_i \in \mathbb{R}^{n}$.
${I_n}$ and $0_{n \times n}$ denote the identity  and  zero matrices with $n \times n$ dimensions, respectively.
${0_n}$ and ${1_n}$ denote the column vectors of $n$ ones and zeros, respectively.
$diag\{v_1,v_2,\dots,v_N\}$ denotes a diagonal matrix with the elements $v_1,v_2,\dots,v_N$ being all on its main diagonal.

\subsubsection{Graph Theory (see  \cite{GODSIL2004Algebraic})}

 A directed graph (or simple a digraph) of $N$ nodes is denoted by $\mathcal{G} :=\left\lbrace {\mathcal{V},\mathcal{E}},\mathcal{A}\right\rbrace$, where $\mathcal{V} =\left\{1,\dots,N\right\}$, $\mathcal{E}\subseteq\mathcal{V}\times\mathcal{V}$ are the node and edge sets, respectively, and $\mathcal{A}$ is the adjacency matrix.
$(i,j)\in\mathcal{E}$ is an edge of $\mathcal{G}$ if node $i$ can receive information from node $j$.
Denote adjacency matrix $\mathcal{A} := \left\lbrack a_{ij} \right\rbrack_{N\times N}$, where $a_{ij}> 0$ if $(i,j) \in \mathcal{E}$, and  $a_{ij} = 0$, otherwise.
Here $a_{ii}=0$ for any $i\in\mathcal{V}$, which indicates no self-connection in the graph.
Besides, for an edge $(i,j)\in\mathcal{E}$, $i$ is called the out-neighbor of $j$, and $j$ is called the in-neighbor of $i$.
The weighted in-degree and weighted out-degree of node $i$ are $d_{in}^i=\sum_{j=1}^Na_{ij}$ and  $d_{out}^i=\sum_{j=1}^Na_{ji}$, respectively.
Note that $d_{in}^i$ may not equal to $d_{out}^i$ for weight-unbalanced digraphs.
A directed path from $v_0$ and $v_l$ is defined as a sequence of nodes $v_0,\dots,v_l$, such that $(v_k,v_{k+1})\in\mathcal{E}$, $\forall k\in\{0,1,\dots,l-1\}$.
Particularly, a digraph $\mathcal{G}$ is said to be strongly connected if there exists a directed path between any two distinct nodes.
The Laplacian matrix of $\mathcal{G}$ is $L:=\mathcal{D}_{in} - \mathcal{A}$, where $\mathcal{D}_{in} := diag\{d_{in}^1,\dots,d_{in}^N\}\in \mathbb{R}^{N\times N}$. Obviously, $L1_N=0_N$.

For a weighted-unbalanced digraph $\mathcal{G}$, we have the following results.
\begin{lemma}\label{lem.M}
	(see \cite{Hu2007LeaderfollowingCO}) Suppose the weight-unbalanced digraph $\mathcal{G}$ is strongly connected with the Laplacian matrix $L$, and
	$M$ is a diagonal matrix with nonnegative diagonal elements and at least one diagonal element being positive.
	Then, the following conditions hold.
	\begin{enumerate}
		\item[(i)] $L+M$ is positive definite;
		\item[(ii)] There exists a positive definite matrix $Q\in\mathbb{R}^{N\times N}$ such that $Q(L+M) + (L+M)^TQ = I_N$.
	\end{enumerate}
\end{lemma}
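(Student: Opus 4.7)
My plan is to establish (i) by showing that $L+M$ has no eigenvalue with nonpositive real part, and then to obtain (ii) by invoking the classical Lyapunov equation theorem once (i) is in hand. The structural features of $L+M$ that I will exploit are that its off-diagonal entries are $-a_{ij}\le 0$, its $i$th diagonal entry is $d_{in}^i + m_i \ge 0$, and its $i$th row sum equals $m_i$, which is nonnegative overall and strictly positive for at least one index.

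For (i), the first step is to apply the Gershgorin circle theorem: every eigenvalue of $L+M$ lies in the union of discs centered at $d_{in}^i + m_i$ with radius $d_{in}^i$. Each such disc sits in the closed right half-plane and can touch the imaginary axis only when $m_i = 0$, and then only at the origin. Hence any eigenvalue $\lambda$ with $\mathrm{Re}(\lambda)\le 0$ must in fact equal $0$. To rule $0$ out, I would assume $(L+M)v = 0$ for some nonzero $v$, pick $k$ with $|v_k| = \max_i |v_i|$, and from the $k$th equation derive $(d_{in}^k + m_k)|v_k| \le d_{in}^k |v_k|$. This forces $m_k = 0$ together with equality in the triangle inequality, which in turn gives $|v_j| = |v_k|$ for every in-neighbor $j$ of $k$. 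Iterating along directed paths and using the strong connectivity of $\mathcal{G}$ propagates this identity to the whole vertex set, so that $|v_i| = |v_k|$ and $m_i = 0$ for every $i$, contradicting the hypothesis that some $m_i > 0$.

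For (ii), once (i) is established the matrix $A := -(L+M)$ is Hurwitz. The standard Lyapunov lemma then furnishes a unique symmetric positive definite $Q$ solving $QA + A^{T}Q = -I_N$, which is exactly $Q(L+M) + (L+M)^{T}Q = I_N$.

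The main obstacle is the propagation step in (i): the maximum-modulus inequality only pins down $m_k = 0$ at the specific index of maximal $|v_k|$, and upgrading this to $m_i = 0$ for every $i$ requires carefully walking along the directed edges guaranteed by strong connectivity. Once that is handled, the remainder is a standard irreducibly diagonally dominant M-matrix analysis followed by a routine invocation of Lyapunov's theorem.
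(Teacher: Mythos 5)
Your argument is correct. Note that the paper offers no proof of this lemma at all---it is imported verbatim from the cited leader-following reference \cite{Hu2007LeaderfollowingCO}---so there is no in-paper argument to diverge from; what you have written is essentially the standard proof behind that cited result: Gershgorin localization to confine the spectrum of $L+M$ to the closed right half-plane with $0$ as the only possible boundary point, a maximum-modulus/irreducibility argument to exclude $0$, and Lyapunov's theorem applied to the Hurwitz matrix $-(L+M)$ to produce $Q$. The propagation step you flag as the main obstacle goes through exactly as you set it up: at a maximizing index $k$ the $k$th row of $(L+M)v=0$ forces $m_k=0$ and $|v_j|=|v_k|$ for every in-neighbor $j$, each such $j$ is again maximizing, and strong connectivity gives a directed path from any node $i$ to $k$, so walking backwards along such a path reaches every node, yielding $m_i=0$ for all $i$ and the contradiction. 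One caveat you should make explicit: what your proof establishes in (i) is that every eigenvalue of $L+M$ has positive real part (positive stability), not positivity of the quadratic form $x^T(L+M)x$; for weight-unbalanced digraphs the symmetric part of $L+M$ can in fact be indefinite, which is precisely why (ii) requires a nontrivial $Q$ rather than $Q=\tfrac{1}{2}I_N$. Positive stability is what the paper actually uses (nonsingularity in Theorem 1 and the Lyapunov equation in (ii)), so your reading of ``positive definite'' is the appropriate one, but it deserves a sentence rather than being left implicit.
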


 \subsubsection{Variational Analysis (see \cite{2004Variational})}

A function $f: {\mathbb{R}^m} \to \mathbb{R}$ is convex if
\begin{align*}
	f(\alpha x + (1-\alpha)y) \leq \alpha f(x) + (1-\alpha)f(y)\\
	\forall x, y \in {\mathbb{R}^m}, \forall \alpha \in [0,~1].
\end{align*}

A map $F:{\mathbb{R}^m} \to \mathbb{R}^m$ is  $\omega$-strongly monotone ($\omega>0$) if
\begin{equation*}
	{(x - y)^T}(\nabla F(x) - \nabla F(y)) \ge \omega{\left\| {x - y} \right\|^2}, ~ \forall \, x, y \in \mathbb{R}^m.
\end{equation*}

A map $F:{\mathbb{R}^m} \to \mathbb{R}^m$ is $\theta$-Lipschitz ($\theta>0$) if
\begin{equation*}
	\left\| {F(x) - F(y)} \right\| \le \theta{\left\| {x - y} \right\|},~ \forall \, x, y \in \mathbb{R}^m.
\end{equation*}

For a continuously differentiable map $F:{\mathbb{R}^m} \to \mathbb{R}^m$, it is (strongly) strictly monotone if and only if its Jacobian matrix $\mathcal{J}F$ is (uniformly) positive definite (see \cite{2004Variational}).

The solution $x\in\mathbb{R}^m$ of the variational inequality $VI(\mathbb{R}^m,F)$, $F:{\mathbb{R}^m} \to \mathbb{R}^m$, satisfies the following condition  (see \cite{2004Variational}).
\begin{equation}\label{VI}
	(y-x)F(x)\geq 0, \forall y\in\mathbb{R}^m.
\end{equation}
In addtion, $x$ satisfies the condition \eqref{VI} if and only if $x$ is the solution of $F(x)=0_m$

The following lemma is about the uniqueness of the solution of $VI(\mathbb{R}^m,F)$.
\begin{lemma}
	(see \cite{2003Finite}) If a continuous map $F:{\mathbb{R}^m} \to \mathbb{R}^m$ is $\omega$-strongly monotone,  then $VI(\mathbb{R}^m,F)$ has a unique solution.
\end{lemma}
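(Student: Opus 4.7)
The plan is to exploit the equivalence noted immediately above the lemma: since the variational inequality is posed on the whole space $\mathbb{R}^m$, the condition $(y - x)^T F(x) \geq 0$ for every $y \in \mathbb{R}^m$ reduces to $F(x) = 0_m$. Hence the lemma amounts to showing that a continuous $\omega$-strongly monotone map on $\mathbb{R}^m$ has exactly one zero, and I would split the argument into a uniqueness step and an existence step.

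Uniqueness is immediate from strong monotonicity. If $x_1$ and $x_2$ both solve the VI, then $F(x_1) = F(x_2) = 0_m$, so the inner product $(x_1 - x_2)^T(F(x_1) - F(x_2))$ vanishes. On the other hand, strong monotonicity forces the same quantity to be at least $\omega\|x_1 - x_2\|^2$. Comparing the two inequalities gives $\|x_1 - x_2\|^2 \leq 0$, hence $x_1 = x_2$.

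Existence is where the real work lies. The first step is to extract coercivity from strong monotonicity by fixing $y = 0$ and applying the Cauchy--Schwarz inequality, which yields $x^T F(x) \geq \omega \|x\|^2 - \|F(0)\|\,\|x\|$. In particular $x^T F(x) > 0$ whenever $\|x\| > \|F(0)\|/\omega$, so on every sufficiently large sphere $\|x\| = R$ the continuous field $F$ points strictly outward. A standard topological argument---for instance, applying Brouwer's fixed-point theorem to the continuous self-map $x \mapsto \Pi_{B_R}(x - \alpha F(x))$ on the closed ball $B_R$ for sufficiently small $\alpha > 0$, and then using the outward-pointing condition to exclude boundary fixed points---produces an interior zero of $F$, which is the desired solution of the VI.

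The main obstacle is precisely this existence step under the minimal assumption of continuity. Strong monotonicity does not by itself entail Lipschitz continuity, so the cleanest route of showing that $x \mapsto x - \alpha F(x)$ is a contraction and invoking Banach's theorem is not directly available. One is therefore forced to rely on a degree-theoretic or Brouwer-based argument rather than a constructive iteration. If additional Lipschitz regularity were assumed---as would be natural for the downstream algorithm design in the paper---the contraction route would simultaneously deliver existence, uniqueness, and a constructive update rule; for the general continuous case stated here, one must accept a nonconstructive topological proof of existence, after which uniqueness follows from the short calculation above.
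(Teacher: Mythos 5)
Your proposal is correct, but note that the paper does not prove this lemma at all: it is imported by citation from Facchinei and Pang's monograph, and the surrounding text only records the equivalence (for $K=\mathbb{R}^m$) between the VI and the equation $F(x)=0_m$ that you also start from. What you have written is essentially the standard argument from that reference, filled in self-containedly: the uniqueness step via $0=(x_1-x_2)^T(F(x_1)-F(x_2))\ge\omega\|x_1-x_2\|^2$ is exactly right, and the existence step (coercivity $x^TF(x)\ge\omega\|x\|^2-\|F(0)\|\,\|x\|$ from strong monotonicity with $y=0$, then Brouwer applied to $x\mapsto\Pi_{B_R}(x-\alpha F(x))$ on a large closed ball) is the textbook route for unbounded sets. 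Two small refinements if you write it out in full: the step size $\alpha>0$ need not be small --- any fixed $\alpha$ works, since a fixed point of the projected map solves the VI over $B_R$ regardless of $\alpha$; and the exclusion of boundary solutions is cleanest phrased through that ball-restricted VI, namely a fixed point $x^*$ with $\|x^*\|=R$ satisfies $(y-x^*)^TF(x^*)\ge0$ for all $y\in B_R$, and $y=0$ then gives $x^{*T}F(x^*)\le0$, contradicting coercivity once $R>\|F(0)\|/\omega$; an interior fixed point then forces $F(x^*)=0_m$. Your closing remark is also accurate: the contraction/Banach route needs Lipschitz continuity in addition to strong monotonicity (and in the paper's setting Assumption 4 does supply $\theta$-Lipschitz continuity of $\mathcal{F}$, so that constructive route would in fact be available there), whereas under continuity alone the nonconstructive Brouwer/degree argument you give is the right level of generality for the lemma as stated.
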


\subsection{Problem Formulation}
Consider a noncooperative game of $N$ players over a weight-unbalanced digraph $\mathcal{G}$.
Player $i\in\mathcal{V}$ has a differentiable cost function $J_i(x_i,x_{-i}):\mathbb{R}^{Nm}\to \mathbb{R}$, where $x_i\in\mathbb{R}^m$ is the decision of player $i$, $x_{-i}=col(x_1,\dots,x_{i-1},x_{i+1},\dots,x_N)$.
 The purpose of player $i$ is to minimize its own cost function $J_i(x_i,x_{-i})$ by changing its own decision $x_i$.  Specifically, player $i$ faces the following optimization problem.
\begin{equation} \label{q}
	\min \limits_{{x_i}\in\mathbb{R}^{m}}\mathop{J_i(x_i,x_{-i})}.
\end{equation}

The Nash equilibrium of the noncooperative game \eqref{q} is defined as follows (referring to \cite{Facchinei2010GeneralizedNE, Deng2019DistributedAF}).
\begin{definition}\label{def.NE}
	A decision profile ${x}^*:=(x_i^*,x_{-i}^*)$ is a Nash equilibrium of the noncooperative game \eqref{q} if
	\begin{equation*}
		J_i(x_i^*,x_{-i}^*) \leq J_i(x_i,x_{-i}^*), \forall x_i\in\mathbb{R}^m, i \in \mathcal{V}.
	\end{equation*}
\end{definition}

The above definition indicates that, at a Nash equilibrium, no player can decrease its cost function by changing its own decision unilaterally.

Player $i\in\mathcal{V}$ has the following uncertain $n$th-order $(n\geq 1)$ nonlinear dynamics.
\begin{equation}\label{sys}
	\begin{aligned}
		\dot x_i =& {x_i}^{(1)}\\
		\vdots & \\
		\dot{x}_i^{(n-1)} =& {x_i}^{(n)}\\
		{x_i}^{(n)} =&f_i(x_i,{x_i}^{(1)},\dots,{x_i}^{(n-1)},w_i)+u_i
	\end{aligned}
\end{equation}
where $x_i^{(l)},l\in\{1,\dots,n\}$, is the $l$th order derivative of $x_i$;
$f_i(\cdot,\dots,\cdot,w_i): \mathbb{R}^m\times\dots\times\mathbb{R}^m\to \mathbb{R}^m$ is the nonlinear function; $w_i\in\mathbb{R}^{n_{w_i}}$ is the unknown constant parameter with  $n_{w_i}\in\mathbb{Z}_{+}$;
and $u_i\in\mathbb{R}^m$ is the control input of player $i$.

\begin{remark}
	In engineering, the dynamics of many physical systems, such as jerk systems \cite{Eichhorn1998TransformationsON}, synchronous generators \cite{2002Nonlinear}, vehicles \cite{2006Vehicle}, and turbine generators \cite{Fang2011BacksteppingbasedNA}, can be described by \eqref{sys}.	
	Besides, the first-order and second-order nonlinear systems, and high-order linear systems investigated in \cite{Zhang2020DistributedNE,Deng2021DistributedEA,Romano2020DynamicNS} are the spacial cases of the high-order nonlinear system \eqref{sys}.
	Therefore, our results can be straightforwardly applied to these systems.
\end{remark}

The aim of this paper is to design distributed algorithms such that the output of high-order nonlinear player \eqref{sys} converges to the Nash equilibrium of the noncooperative game \eqref{q}.

\begin{remark}
	In contrast to well-studied noncooperative games, such as \cite{Lou2016NashEC,Frihauf2012NashES,Gadjov2019APA,Ye2017DistributedNE},
    this formulation involve the high-order nonlinear dynamics of players, which is in keeping with the development of CPSs.
	Moreover, the communication topologies among players in our formulation are weight-unbalanced digraphs, which are weaker than undirected graphs and weight-balanced digraphs required by \cite{Frihauf2012NashES, Li2020DistributedGN,Koshal2016DistributedAF,Deng2019DistributedAF,ZENG2019GNE,Gadjov2019APA,Lu2019DistributedAF, Ye2017DistributedNE,Deng2021nonsmooth,Zhang2020DistributedNE}.
Without considering the control of high-order dynamics and the weight-unbalanced digraphs, 	existing Nash equilibrium seeking algorithms are ineffective for our problem, such as \cite{Deng2021DistributedSM,Lou2016NashEC,Frihauf2012NashES,Li2020DistributedGN,Koshal2016DistributedAF,Deng2019DistributedAF,ZENG2019GNE,Gadjov2019APA,Lu2019DistributedAF,Ye2017DistributedNE,Deng2021nonsmooth,Zhang2020DistributedNE,Deng2021DistributedEA,Romano2020DynamicNS}.
Because of the  high-order nonlinear dynamics, the nonlinear cost functions, and the weighted-unbalance digraphs, it is not easy to design distributed Nash equilibrium seeking algorithms for our problem and analyze their convergence.
\end{remark}

Some assumptions about the communication topologies, the high-order nonlinear dynamics, and the cost functions are presented as follows.
\begin{assumption}\label{ass.G}
	The weight-unbalanced digraph $\mathcal{G}$ is strongly connected.
\end{assumption}

\begin{assumption}\label{ass.non}
	$f_i(\cdot)$ satisfies the following Lipschitz-like condition.
	\begin{align*}	
		&\|f_i(x_i,{x_i}^{(1)},\dots,{x_i}^{(n-1)},w_i)-f_i(\check{x}_i,{\check{x}_i}^{(1)},\dots,{\check{x}_i}^{(n-1)},w_i)\|\\
		&\leq l_x\|x_i-\check{x}_i\|+\sum_{l=1}^{n-1} l_{x^{(l)}}\|x_i^{(l)}-\check{x}_i^{(l)}\|
	\end{align*}
	where $l_x,l_{x^{(l)}} \geq0$.
\end{assumption}
\begin{remark}
	Assumption \ref{ass.non} generalizes the Lipschitz condition satisfied by numerous well-known systems, such as
	Chen systems, Chua's circuits, Lorenz systems, pendulum systems and car-like robots (see \cite{Ghapani19} and references therein). What is more,  if the partial derivatives of $f_i(\cdot)$ are uniformly bounded, such as (piecewise-) linear continuous functions, then Assumption \ref{ass.non} holds.
\end{remark}

\begin{assumption}\label{ass.fun}
	The cost function $J_i(x_i,x_{-i})$ is continuously differentiable in $x$ and convex in $x_i$ for every fixed $x_{-i}$.
\end{assumption}
\begin{assumption}\label{ass.func}
	The pseudo-gradient $\mathcal{F}({x})$ is $\omega$-strongly monotone and $\theta$-Lipschitz continuous,
	where
	\begin{equation}\label{F(x)}
		\mathcal{F}({x})=col(\nabla_{x_1}J_1(x_1,x_{-1}),\dots,\nabla_{x_N}J_N(x_N,x_{-N})).
	\end{equation}
\end{assumption}

\begin{remark}
	Assumptions \ref{ass.fun} and \ref{ass.func} indicate the existence and uniqueness of the Nash equilibrium of the noncooperative game \eqref{q}, which were widely used in noncooperative games (see \cite{Deng2019DistributedAF,Ye2017DistributedNE,Zhang2020DistributedNE,Deng2021DistributedEA}).
	Also, Assumptions \ref{ass.fun} and \ref{ass.func} hold in variety of practical engineering problems, such as Nash-Cournot games of generation systems (see \cite{Deng2019DistributedAF}), energy consumption games in smart grids (see \cite{Deng2021DistributedSM}), and rate allocation games in communications (see \cite{Yin2011NashEP}).
\end{remark}

With Assumptions \ref{ass.fun} and \ref{ass.func}, we have the following results about the Nash equilibrium of the noncooperative game \eqref{q}.
\begin{lemma}\label{lm.VE}
	(see \cite{2010Generalized}) Under Assumption \ref{ass.fun}, the solution of $VI(\mathbb{R}^{Nm},\mathcal{F})$ coincides with the Nash equilibrium of the noncooperative game \eqref{q}.
\end{lemma}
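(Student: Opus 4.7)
The plan is to show that both ``being a Nash equilibrium'' and ``solving $VI(\mathbb{R}^{Nm},\mathcal{F})$'' are equivalent to the single stationarity condition $\mathcal{F}(x^\ast)=0_{Nm}$, so they must coincide.

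First I would reformulate the Nash equilibrium condition using Assumption \ref{ass.fun}. Fix any candidate equilibrium $x^\ast=(x_i^\ast,x_{-i}^\ast)$. By Definition \ref{def.NE}, $x_i^\ast$ is a global minimizer of $J_i(\,\cdot\,,x_{-i}^\ast)$ on $\mathbb{R}^m$. Since $J_i(\,\cdot\,,x_{-i}^\ast)$ is convex and continuously differentiable (Assumption \ref{ass.fun}) and the decision set is the whole space $\mathbb{R}^m$, the standard first-order characterization of unconstrained convex minimization gives
\[
x_i^\ast \text{ minimizes } J_i(\cdot,x_{-i}^\ast)\ \Longleftrightarrow\ \nabla_{x_i}J_i(x_i^\ast,x_{-i}^\ast)=0_m.
\]
Writing this for every $i\in\mathcal{V}$ and stacking in the pseudo-gradient \eqref{F(x)} yields $\mathcal{F}(x^\ast)=0_{Nm}$, and conversely any $x^\ast$ with $\mathcal{F}(x^\ast)=0_{Nm}$ has each component $x_i^\ast$ a global minimizer of the corresponding $J_i(\cdot,x_{-i}^\ast)$, hence is a Nash equilibrium.

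Second, I would appeal to the observation already recorded in the excerpt immediately after \eqref{VI}: for an unconstrained variational inequality on $\mathbb{R}^m$, the inequality $(y-x)^T F(x)\ge 0$ for all $y\in\mathbb{R}^m$ holds if and only if $F(x)=0_m$. Applying this fact with $F=\mathcal{F}$ on $\mathbb{R}^{Nm}$, we obtain
\[
x^\ast \text{ solves } VI(\mathbb{R}^{Nm},\mathcal{F})\ \Longleftrightarrow\ \mathcal{F}(x^\ast)=0_{Nm}.
\]

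Combining the two equivalences, $x^\ast$ is a Nash equilibrium of \eqref{q} if and only if $x^\ast$ solves $VI(\mathbb{R}^{Nm},\mathcal{F})$, which is the claim. There is really no hard step here: the whole argument rests on (i) the first-order condition for unconstrained convex optimization and (ii) the trivial reduction of an unconstrained VI to a zero-finding problem. The only place one has to be slightly careful is in the ``$\Longleftarrow$'' direction of the Nash step, where convexity (not mere differentiability) of each $J_i$ in its own argument is needed to upgrade the stationary point into a global minimizer, and this is exactly what Assumption \ref{ass.fun} provides.
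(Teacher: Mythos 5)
Your proposal is correct, but it takes a different route from the paper: the paper does not prove Lemma \ref{lm.VE} at all --- it simply invokes the cited reference \cite{2010Generalized}, where the VI reformulation of Nash equilibria is established in the general (constrained) setting --- whereas you give a direct, self-contained argument tailored to the unconstrained case, reducing both ``Nash equilibrium'' and ``solution of $VI(\mathbb{R}^{Nm},\mathcal{F})$'' to the single condition $\mathcal{F}(x^\ast)=0_{Nm}$ via the first-order characterization of unconstrained convex minimization and the trivial observation (already recorded after \eqref{VI}, and verifiable by taking $y=x-\mathcal{F}(x)$) that an unconstrained VI is a zero-finding problem. Your use of Assumption \ref{ass.fun} is exactly right: continuous differentiability gives the stationarity condition, and convexity in each player's own variable upgrades stationarity to global optimality in the reverse direction. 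What each approach buys: the paper's citation covers the more general constrained game and keeps the exposition short, while your elementary argument has the side benefit of proving the stationarity characterization $\mathcal{F}(x^\ast)=0_{Nm}$ directly, i.e.\ it subsumes the paper's subsequent Lemma \ref{lm.NE} (which the paper derives \emph{from} Lemma \ref{lm.VE}), so under your route the two lemmas collapse into one statement with a single short proof.
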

\begin{lemma}\label{lm.NE}
	Under Assumption \ref{ass.fun}, ${x}^*=col(x_1^*,\dots,x_N^*)$ is the Nash equilibrium of the noncooperative game \eqref{q} if and only if
	\begin{equation}\label{NE}
		\nabla_{x_i} J_i(x_i^*,x_{-i}^*)=0_m, \forall i\in\mathcal{V} \quad or\quad  \mathcal{F}({x}^*)=0_{Nm}.
	\end{equation}
\end{lemma}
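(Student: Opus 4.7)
The plan is to prove the two directions separately using the standard first-order optimality characterization for unconstrained convex minimization; the second part of the claim (the equivalent form $\mathcal{F}(x^*)=0_{Nm}$) is then immediate from the stacking definition \eqref{F(x)}. An alternative shortcut would be to chain Lemma \ref{lm.VE} with the remark following \eqref{VI} (which asserts that $x$ solves $VI(\mathbb{R}^m,F)$ iff $F(x)=0_m$), but the direct convex-analytic argument is more self-contained and uses only Assumption \ref{ass.fun}.

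First, I would handle the forward implication. Suppose $x^*$ is a Nash equilibrium. By Definition \ref{def.NE}, for each $i\in\mathcal{V}$ the point $x_i^*$ is a global minimizer of the map $x_i\mapsto J_i(x_i,x_{-i}^*)$ on the unconstrained set $\mathbb{R}^m$. Assumption \ref{ass.fun} guarantees that this map is both convex and continuously differentiable in $x_i$, so the first-order necessary condition for unconstrained minimization yields $\nabla_{x_i}J_i(x_i^*,x_{-i}^*)=0_m$. Doing this for all $i$ and stacking, the definition \eqref{F(x)} of $\mathcal{F}$ gives $\mathcal{F}(x^*)=0_{Nm}$.

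For the converse, assume $\nabla_{x_i}J_i(x_i^*,x_{-i}^*)=0_m$ for every $i\in\mathcal{V}$. Fix $i$ and any $x_i\in\mathbb{R}^m$. Convexity of $J_i(\cdot,x_{-i}^*)$ combined with differentiability gives the first-order lower bound
\begin{equation*}
J_i(x_i,x_{-i}^*)\;\geq\; J_i(x_i^*,x_{-i}^*) + \bigl(\nabla_{x_i}J_i(x_i^*,x_{-i}^*)\bigr)^{T}(x_i-x_i^*) \;=\; J_i(x_i^*,x_{-i}^*),
\end{equation*}
where the equality uses the standing hypothesis. This is exactly the inequality in Definition \ref{def.NE}, so $x^*$ is a Nash equilibrium. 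The ``or'' connecting the two forms of \eqref{NE} then follows by observing that $\mathcal{F}(x^*)=0_{Nm}$ holds if and only if every block $\nabla_{x_i}J_i(x_i^*,x_{-i}^*)$ vanishes.

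I do not anticipate a substantive obstacle: the argument is the textbook equivalence between stationarity and optimality for convex differentiable functions, applied coordinate-block-wise. The only subtle point worth flagging is that the unconstrained decision space $\mathbb{R}^m$ is essential here — if each player were restricted to a nontrivial closed convex set, the zero-gradient condition would have to be replaced by a variational inequality of the form $\nabla_{x_i}J_i(x_i^*,x_{-i}^*)^{T}(x_i-x_i^*)\geq 0$ over that set, and the crisp equality $\mathcal{F}(x^*)=0_{Nm}$ would no longer characterize Nash equilibria.
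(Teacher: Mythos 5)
Your proof is correct, but it takes a different route from the paper. The paper simply chains Lemma~\ref{lm.VE} (Nash equilibria of \eqref{q} coincide with solutions of $VI(\mathbb{R}^{Nm},\mathcal{F})$ under Assumption~\ref{ass.fun}) with the remark following \eqref{VI} that an unconstrained variational inequality $VI(\mathbb{R}^m,F)$ is solved by $x$ if and only if $F(x)=0_m$; that is, the paper uses exactly the ``alternative shortcut'' you mention and dismiss. Your version instead argues block-wise from first principles: the forward direction via the first-order necessary condition at an unconstrained minimizer of the convex, continuously differentiable map $x_i\mapsto J_i(x_i,x_{-i}^*)$, and the converse via the gradient inequality $J_i(x_i,x_{-i}^*)\geq J_i(x_i^*,x_{-i}^*)+\nabla_{x_i}J_i(x_i^*,x_{-i}^*)^{T}(x_i-x_i^*)$, which collapses to Definition~\ref{def.NE} when the gradient vanishes. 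What your approach buys is self-containedness — it needs only Assumption~\ref{ass.fun} and standard convex calculus, without invoking the cited variational-inequality results — and your closing caveat about constrained decision sets correctly identifies where the crisp condition $\mathcal{F}(x^*)=0_{Nm}$ would break down. What the paper's approach buys is brevity and consistency with its variational-analysis toolkit, since Lemma~\ref{lm.VE} and the VI characterization are already in place and are reused elsewhere (e.g., uniqueness via strong monotonicity). Both arguments are sound; no gap.
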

\begin{proof}
	Based on Lemma \ref{lm.VE}, the solution of $VI(\mathbb{R}^{Nm},\mathcal{F})$ is equivalent to the Nash equilibrium of the noncooperative game \eqref{q}, and the converse is true. Furthermore, the solution of $VI(\mathbb{R}^{Nm},\mathcal{F})$ satisfies $\mathcal{F}({x})=0_{Nm}$. Consequently, we obtain the conclusion.
\end{proof}

\section{Main Results}\label{sec.r}
This section proposes two distributed Nash equilibrium seeking algorithms for the noncooperative game \eqref{q} with the high-order nonlinear player \eqref{sys},
and then analyzes their convergence.

\subsection{Distributed State-Based Algorithm}\label{al.sb}

Before giving our algorithms, the following characteristic polynomial associated with real coefficients $(k_1,\ldots,k_{n-1})$ is defined such that its roots are in the open left half plane (LHP).
\begin{align}\label{cp}
	p(s) :=s^{n-1} +k_{n-1} s^{n-2} +\ldots +k_{2} s +k_{1}.
\end{align}

The above definition indicates that the following companion matrix $A$ is Hurwitz.
\begin{align}\label{eqA}
	A=\left[
	\begin{array}{ccc}
		0_{n-2} & \vline & I_{n-2}  \\
		\hline
		-k_1 & \vline &
		\left[
		\begin{array}{ccc}
			-k_2 &  \ldots & -k_{n-1}
		\end{array}
		\right]
	\end{array}
	\right].
\end{align}

The following  well-known lemma about the companion matrix $A$ is used later. 
\begin{lemma}\label{lem.cp}
	(see \cite{1998Linear}) There exists a positive definite symmetric matrix $P:=[p_{ij}]_{(n-1) \times (n-1)}$ such that $PA +A^T P =-I_{n-1}$ is satisfied.
\end{lemma}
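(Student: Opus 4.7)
The plan is to deduce the lemma from the classical Lyapunov theorem for linear time-invariant systems, using only the Hurwitz property of the companion matrix $A$.

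First, I would justify that $A$ is Hurwitz. By definition of \eqref{cp}, the characteristic polynomial $p(s)$ of the companion matrix $A$ in \eqref{eqA} has all its roots in the open left half plane. Since the characteristic polynomial of a companion matrix coincides with $p(s)$, every eigenvalue of $A$ has strictly negative real part, so $A$ is Hurwitz. Hence the matrix exponential $e^{At}$ decays exponentially: there exist constants $c,\lambda>0$ with $\|e^{At}\|\le c\,e^{-\lambda t}$ for all $t\ge 0$.

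Next, I would apply the standard Lyapunov theorem (as in the cited reference \cite{1998Linear}): for any Hurwitz matrix $A$ and any symmetric positive definite matrix $Q$, the Lyapunov equation $PA+A^TP=-Q$ admits a unique symmetric positive definite solution $P$. Taking $Q=I_{n-1}$ yields the desired conclusion.

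For completeness, I would exhibit the solution explicitly as
\begin{equation*}
P=\int_{0}^{\infty} e^{A^{T}t}\, e^{At}\, dt,
\end{equation*}
whose convergence is guaranteed by the exponential decay of $e^{At}$. Symmetry is immediate from the form of the integrand, and positive definiteness follows because for any nonzero $x\in\mathbb{R}^{n-1}$, $x^{T}P x=\int_{0}^{\infty}\|e^{At}x\|^{2}dt>0$ since $e^{At}$ is invertible. The Lyapunov identity itself is obtained by computing
\begin{equation*}
PA+A^{T}P=\int_{0}^{\infty}\frac{d}{dt}\bigl(e^{A^{T}t}e^{At}\bigr)\,dt
=\bigl[e^{A^{T}t}e^{At}\bigr]_{0}^{\infty}=-I_{n-1},
\end{equation*}
where the boundary term at infinity vanishes by the Hurwitz property.

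There is essentially no obstacle here: the result is a textbook consequence of linear systems theory, and the only thing to verify carefully is that the construction of $A$ via a Hurwitz polynomial $p(s)$ does place all eigenvalues of $A$ in the open LHP, which is automatic from the companion-matrix structure.
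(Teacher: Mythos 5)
Your proposal is correct and is exactly the standard argument the paper relies on: the paper does not prove Lemma~\ref{lem.cp} itself but cites the classical Lyapunov theorem for Hurwitz matrices, noting just before the lemma that the companion matrix $A$ of $p(s)$ is Hurwitz, which is precisely your first step. Your explicit integral construction $P=\int_0^\infty e^{A^Tt}e^{At}\,dt$ with $Q=I_{n-1}$ is the textbook proof of that cited result, so there is nothing to add.
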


When the players' state variables are obtainable, the distributed Nash equilibrium seeking algorithm for player $i\in\mathcal{V}$ is designed as follows.
\begin{subequations}\label{s.al}
	\begin{align}
		u_i =& -\sum_{l=1}^{n-1} \varepsilon^{n-l} k_l x_i^{(l)} - \alpha_1 \nabla_{x_i}J_i(x_i,\hat{x}_{-i}) - \alpha_2 y_i \label{s.alga}\\
		\dot {y}_i =& \sum_{l=1}^{n-1} \varepsilon^{1-l} k_l x_i^{(l)} + \frac{\alpha_1}{\varepsilon^{n-1}} \nabla_{x_i}J_i(x_i,\hat{x}_{-i}) \label{s.algb}\\
		\dot{\hat x}_{j}^i =& -\alpha_3(\sum_{k=1}^{N}a_{ik}({\hat x}_{j}^i-{\hat x}_{j}^k) + a_{ij}({\hat x}_{j}^i-x_j)) \label{s.algc}
	\end{align}
\end{subequations}
where ${\hat x}_{-i} = col({\hat x}_1^i,\dots,{\hat x}_{i-1}^i,{\hat x}_{i+1}^i,\dots,{\hat x}_{N}^i)$ with ${\hat x}_{j}^i$ being the estimate of player $i$ on $x_j$ of player $j$; $\alpha_1, \alpha_2, \alpha_3$ and $\varepsilon$ are parameters (to be determined later); $k_1,\ldots, k_{n-1}$ are  the coefficients of the characteristic polynomial $p(s)$ defined in \eqref{cp} with roots in the open LHP.

Substituting \eqref{s.al} into \eqref{sys}, we have the following closed-loop system.
\begin{subequations}\label{csys}
	\begin{align}
		\dot x =& {x}^{(1)}\\
		\vdots & \nonumber\\
		{x}^{(n)} =& f(x,\dots,x^{(n-1)},w)-\sum_{l=1}^{n-1} \varepsilon^{n-l} k_l x^{(l)}\\
		& - \alpha_1 {F}(x,\hat{ x})- \alpha_2 y \\
		\dot y =& \sum_{l=1}^{n-1} \varepsilon^{1-l} k_l x^{(l)} + \frac{\alpha_1}{\varepsilon^{n-1}} {F}(x,\hat{ x}) \\
		\dot{\hat{ x}} =& -\alpha_3(((L\otimes I_N)\otimes I_m)\hat{ x} + (M\otimes I_m)(\hat{ x}-1_N\otimes x))
	\end{align}
\end{subequations}
where
	$x=col(x_1,\ldots,x_N)$;
	$x^{(l)}= col(x_1^{(l)},\ldots, x_N^{(l)})$ with $l\in\{1,\dots,n\}$;
	$y= col(y_1,\ldots, y_N)$;
	$\hat{ x}=col(\hat{ x}_1,\ldots,\hat{ x}_N)$;
	$\hat{ x}_i=col(\hat{x}_{1}^i,\ldots, \hat{x}_{N}^i)$;
	$f(x,\dots,x^{(n-1)},w)=col(f_1(x_1,\dots,x_1^{(n-1)},w_1),\dots,f_N(x_N,\dots,x_N^{(n-1)},w_N))$;
	${F}(x,\hat{ x})=col(\nabla_{x_1}J_1(x_1,{\hat x}_{-1}),\dots,\nabla_{x_N}J_N(x_N,{\hat x}_{-N}))$; and $M=diag\{M_1,\dots,M_N\}$ with $M_i=diag\{a_{i1},a_{i2},\dots,a_{iN}\}$.
In what follows, we use $f$ to denote $f(x,\dots,x^{(n-1)},w)$, if there is no ambiguity.

For the system \eqref{csys}, we have the following theorem.
\begin{theorem}\label{tm.equ1}
	Under Assumptions \ref{ass.G} and \ref{ass.fun},
	if $({x}^*,x^{*(1)},\dots,x^{*(n-1)},y^*,{\hat{ x}}^*)$  is an equilibrium point of \eqref{csys},
	then ${x}^*$ is a Nash equilibrium of the noncooperative game \eqref{q}. Conversely, if ${x}^*$ is a Nash equilibrium of the noncooperative game \eqref{q}, there exists $(x^{*(1)},\dots,x^{*(n-1)},y^*,\hat{ x}^*)\in\mathbb{R}^{Nm}\times\dots\times\mathbb{R}^{Nm}\times\mathbb{R}^{Nm}\times\mathbb{R}^{N^2m}$ such that $({x}^*,x^{*(1)},\dots,x^{*(n-1)},y^*,{\hat{ x}}^*)$ is an equilibrium point of \eqref{csys}.
\end{theorem}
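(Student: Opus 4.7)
The plan is to set every time derivative in \eqref{csys} to zero and read off, component by component, what this forces on $(x^*, x^{*(1)}, \ldots, x^{*(n-1)}, y^*, \hat x^*)$. The integrator chain $\dot x = x^{(1)}$, $\dot x^{(1)} = x^{(2)}$, $\ldots$ immediately yields $x^{*(l)} = 0$ for $l = 1, \ldots, n-1$. Substituting this into $\dot y = 0$ collapses that equation to $(\alpha_1/\varepsilon^{n-1}) F(x^*,\hat x^*) = 0$, so $F(x^*,\hat x^*) = 0_{Nm}$; substituting back into the highest-order equation $x^{*(n)} = 0$ then simply determines $y^* = (1/\alpha_2) f(x^*, 0, \ldots, 0, w)$ and imposes no additional constraint on $x^*$.

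The core of the argument is therefore the observer equation $\dot{\hat x} = 0$. Writing $\xi^j := col(\hat x_j^{*1}, \ldots, \hat x_j^{*N})$ and unpacking the Kronecker structure, the $j$th coordinate decouples into
\begin{equation*}
\bigl((L + \widetilde M^{(j)}) \otimes I_m\bigr)\xi^j \;=\; (\widetilde M^{(j)} \otimes I_m)(1_N \otimes x_j^*),
\end{equation*}
with $\widetilde M^{(j)} := diag(a_{1j},\ldots,a_{Nj})$. Since $\mathcal G$ is strongly connected (Assumption \ref{ass.G}), every node $j$ has at least one out-neighbor, so $\widetilde M^{(j)}$ carries at least one positive diagonal entry; by Lemma \ref{lem.M}, $L + \widetilde M^{(j)}$ is positive definite and hence invertible. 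The ansatz $\xi^j = 1_N \otimes x_j^*$ solves the equation because $L 1_N = 0_N$, and by invertibility it is the unique solution. Hence $\hat x^* = 1_N \otimes x^*$, which gives $\hat x_{-i}^{*i} = x_{-i}^*$ for every $i$ and therefore $F(x^*,\hat x^*) = \mathcal F(x^*)$. Combining with the identity $F(x^*,\hat x^*) = 0$ from the first paragraph yields $\mathcal F(x^*) = 0_{Nm}$, and Lemma \ref{lm.NE} identifies $x^*$ as a Nash equilibrium of \eqref{q}.

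For the converse, given a Nash equilibrium $x^*$, I would pick $x^{*(1)} = \cdots = x^{*(n-1)} = 0_{Nm}$, $\hat x^* := 1_N \otimes x^*$, and $y^* := (1/\alpha_2) f(x^*, 0, \ldots, 0, w)$, and then verify each equation of \eqref{csys} directly: the integrator chain is immediate, the observer equation reduces to $L 1_N = 0$, and $\dot y = 0$ together with $\dot x^{(n-1)} = 0$ are handled using $\mathcal F(x^*) = 0$ supplied by Lemma \ref{lm.NE}. The only step that is not pure bookkeeping is the per-coordinate uniqueness for $\hat x^*$; that is where Lemma \ref{lem.M} combined with strong connectivity of $\mathcal G$ does the real work, since the weight-unbalanced Laplacian $L$ on its own is singular and would not pin the estimator down.
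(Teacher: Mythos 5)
Your proposal is correct and follows essentially the same route as the paper: set all derivatives in \eqref{csys} to zero, deduce $x^{*(l)}=0$ and $F(x^*,\hat x^*)=0_{Nm}$ from the integrator chain and the $\dot y$-equation, use strong connectivity together with Lemma \ref{lem.M} to force $\hat x^*=1_N\otimes x^*$, conclude $\mathcal F(x^*)=0_{Nm}$ via Lemma \ref{lm.NE}, and for the converse exhibit the equilibrium with $x^{*(l)}=0$, $\hat x^*=1_N\otimes x^*$, $y^*=\frac{1}{\alpha_2}f^*$. Your only deviation is the explicit per-coordinate decoupling into $L+\widetilde M^{(j)}$ blocks (noting every node of a strongly connected digraph has an out-neighbor), which is a slightly more careful justification of the step the paper handles by applying Lemma \ref{lem.M} directly to the block matrix $(L\otimes I_N)\otimes I_m+M\otimes I_m$.
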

\begin{proof}
	(i) The equilibrium point of \eqref{csys} satisfies the following conditions.
	
	\begin{subequations}
		\label{equi}	
		\begin{align}
			0_{Nm}	&= {x^*}^{(l)}, l\in\{1,\dots,n-1\}  \label{equia}\\
			0_{Nm} &= f^*-\sum_{l=1}^{n-1} \varepsilon^{n-l} k_l x^{*(l)} - \alpha_1 {F}(x^*,{\hat{ x}}^*)- \alpha_2 y^*   \label{equib}\\
			0_{Nm} &= \sum_{l=1}^{n-1} \varepsilon^{1-l} k_l x^{*(l)} + \frac{\alpha_1}{\varepsilon^{n-1}} {F}(x^*,{\hat{ x}}^*)   \label{equic}\\
			\nonumber 0_{N^2 m} &= -\alpha_3(((L\otimes I_N)\otimes I_m){\hat{ x}}^*\\
			&+ (M\otimes I_m)({\hat{ x}}^*-1_N\otimes x))   \label{equid}
		\end{align}
	\end{subequations}
	where $f^*=f(x^*,\ldots,x^{*(n-1)},w)$.

	Since the weight-unbalanced digraph $\mathcal{G}$ is strong connected, and $M$ is a diagonal matrix with at least one diagonal element being positive, it results from Lemma \ref{lem.M} that $(L\otimes I_N)\otimes I_m + (M\otimes I_m)$ is positive definite.
	Then, \eqref{equid} indicates that $((L\otimes I_N)\otimes I_m + (M\otimes I_m)){\hat{ x}}^* =  ((L\otimes I_N)\otimes I_m + (M\otimes I_m))(1_N\otimes x^*)$.
	Therefore, we obtain ${\hat{x}}^*=1_N\otimes x^*$, i.e., ${F}(x^*,{\hat{ x}}^*)={F}(x^*,1_N\otimes x^*)$.
	Further,  based on \eqref{equia}, \eqref{equic}, and  the definition of ${F}(x,\hat{ x})$, we have ${F}(x^*,{\hat{ x}}^*)=\mathcal{F}({ x}^*)=0_{Nm}$.
	Subsequently, according to Lemma \ref{lm.NE}, ${x}^*$ is a Nash equilibrium of the noncooperative game \eqref{q}.
	
	(ii) Conversely, when $x^*$ is a Nash equilibrium of the noncooperative game \eqref{q}, we have $\mathcal{F}({x}^*)=0_{Nm}$.
	Take ${\hat{ x}}^*=1_N\otimes x^*$ and $x^{*(l)}=0_m$, and thus \eqref{equia}, \eqref{equic} and \eqref{equid} hold.
By choosing $y^*=\frac{1}{\alpha_2}f^*$, \eqref{equib} is satisfied.
	Consequently, $({x}^*,x^{*(1)},\dots,x^{*(n-1)},y^*,{\hat{ x}}^*)$ is an equilibrium point of \eqref{csys}.
\end{proof}

By Theorem \ref{tm.equ1}, the following result is yielded.
\begin{lemma}\label{csys.c}
 The high-order nonlinear player \eqref{sys} converges to the Nash equilibrium of the noncooperative game \eqref{q} under the algorithm \eqref{s.al}, if the system \eqref{csys} is stabilized to its equilibrium point.
\end{lemma}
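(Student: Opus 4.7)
The plan is to obtain Lemma \ref{csys.c} as an almost immediate corollary of Theorem \ref{tm.equ1}, so the proof is essentially a matter of chaining together the right assumption and the right equivalence.

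First I would fix an equilibrium point of \eqref{csys} to which the trajectory converges. By hypothesis, there exists $(x^*,x^{*(1)},\dots,x^{*(n-1)},y^*,\hat{x}^*)$ such that the closed-loop state $(x(t),x^{(1)}(t),\dots,x^{(n-1)}(t),y(t),\hat{x}(t))$ tends to this point as $t\to\infty$. In particular, the first block $x(t)$ tends to $x^*$.

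Next I would invoke Theorem \ref{tm.equ1}: under Assumptions \ref{ass.G} and \ref{ass.fun}, any equilibrium point of \eqref{csys} has its $x$-component equal to a Nash equilibrium of the noncooperative game \eqref{q}. Applied to our $(x^*,x^{*(1)},\dots,x^{*(n-1)},y^*,\hat{x}^*)$, this yields that $x^*=\mathrm{col}(x_1^*,\dots,x_N^*)$ is a Nash equilibrium. Combining this with $x(t)\to x^*$ gives $x_i(t)\to x_i^*$ for each $i\in\mathcal{V}$, so the decision of the high-order nonlinear player \eqref{sys} under the algorithm \eqref{s.al} converges to the Nash equilibrium of \eqref{q}, which is the claim.

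There is no real obstacle here: the content lies entirely in Theorem \ref{tm.equ1} (and implicitly in the uniqueness of the Nash equilibrium provided by Lemma \ref{lm.VE} together with Assumption \ref{ass.func}, which guarantees $x^*$ is independent of which equilibrium point we picked, although the lemma as stated does not require uniqueness). The lemma is really just packaging the equivalence ``equilibrium point of closed-loop system $\Leftrightarrow$ Nash equilibrium'' into the form needed to reduce the Nash seeking problem to a stabilization problem for \eqref{csys}, which will be the focus of the subsequent stability analysis.
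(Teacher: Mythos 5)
Your proposal is correct and follows essentially the same route as the paper: the paper's proof likewise deduces the lemma directly from Theorem \ref{tm.equ1} by noting that \eqref{sys} with \eqref{s.al} forms the closed-loop system \eqref{csys}, whose equilibrium points coincide (in the $x$-component) with the Nash equilibrium. Your extra remark about uniqueness is a reasonable refinement but not needed for the statement as given.
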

\begin{proof}
	The high-order nonlinear player \eqref{sys} and its control input \eqref{s.al} constitute the closed-loop system \eqref{csys}.
    Furthermore, based on Theorem \ref{tm.equ1}, the  Nash equilibrium of the noncooperative game \eqref{q} coincide with the equilibrium point of \eqref{csys}.
    Hence, we obtain the result.
\end{proof}

According to Lemma \ref{csys.c}, the convergence of the high-order nonlinear player \eqref{sys} to the Nash equilibrium of the noncooperative game \eqref{q} can be proved by analyzing the stability of the system \eqref{csys}.

Before analyzing the stability of \eqref{csys}, the following lemma about ${F}(x,\hat{ x})$ is given.
\begin{lemma}\label{lem.func}
	Under Assumption \ref{ass.func}, ${F}(x,\hat{ x})$ is $\theta$-Lipschitz in $\hat x$.
\end{lemma}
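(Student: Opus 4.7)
The plan is to reduce the Lipschitz property of $F(x,\hat{x})$ in $\hat{x}$ to the known $\theta$-Lipschitz property of the pseudo-gradient $\mathcal{F}$ from Assumption \ref{ass.func}, by embedding each partial-gradient term $\nabla_{x_i} J_i(x_i,\hat{x}_{-i})$ as one block of $\mathcal{F}$ evaluated at a carefully constructed auxiliary vector.

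First I would fix $x\in\mathbb{R}^{Nm}$ and compare $F(x,\hat{x})$ with $F(x,\hat{x}')$ for any $\hat{x},\hat{x}'\in\mathbb{R}^{N^2 m}$. For each $i\in\mathcal{V}$, define the auxiliary vectors
\begin{equation*}
z^{i}=\mathrm{col}(\hat{x}_{1}^{i},\ldots,\hat{x}_{i-1}^{i},x_{i},\hat{x}_{i+1}^{i},\ldots,\hat{x}_{N}^{i}),\quad z'^{i}=\mathrm{col}(\hat{x}_{1}'^{i},\ldots,\hat{x}_{i-1}'^{i},x_{i},\hat{x}_{i+1}'^{i},\ldots,\hat{x}_{N}'^{i}),
\end{equation*}
i.e., replace the $i$-th entry of $\hat{x}_i$ (resp.\ $\hat{x}_i'$) by the true decision $x_i$. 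By construction, the $i$-th $m$-block of $\mathcal{F}(z^{i})$ is exactly $\nabla_{x_i}J_i(x_i,\hat{x}_{-i})$, and similarly for $z'^{i}$. Consequently, extracting that block gives
\begin{equation*}
\|\nabla_{x_i}J_i(x_i,\hat{x}_{-i})-\nabla_{x_i}J_i(x_i,\hat{x}'_{-i})\|\le\|\mathcal{F}(z^{i})-\mathcal{F}(z'^{i})\|\le\theta\,\|z^{i}-z'^{i}\|,
\end{equation*}
where the last inequality uses the $\theta$-Lipschitz continuity of $\mathcal{F}$ guaranteed by Assumption \ref{ass.func}.

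The next step is a square-and-sum bookkeeping argument. Since $z^{i}$ and $z'^{i}$ agree in the $i$-th block, $\|z^{i}-z'^{i}\|^{2}=\sum_{j\neq i}\|\hat{x}_{j}^{i}-\hat{x}_{j}'^{i}\|^{2}\le\sum_{j=1}^{N}\|\hat{x}_{j}^{i}-\hat{x}_{j}'^{i}\|^{2}$. Summing over $i$,
\begin{equation*}
\|F(x,\hat{x})-F(x,\hat{x}')\|^{2}=\sum_{i=1}^{N}\|\nabla_{x_i}J_i(x_i,\hat{x}_{-i})-\nabla_{x_i}J_i(x_i,\hat{x}'_{-i})\|^{2}\le\theta^{2}\sum_{i=1}^{N}\sum_{j=1}^{N}\|\hat{x}_{j}^{i}-\hat{x}_{j}'^{i}\|^{2}=\theta^{2}\|\hat{x}-\hat{x}'\|^{2},
\end{equation*}
which yields the desired bound $\|F(x,\hat{x})-F(x,\hat{x}')\|\le\theta\|\hat{x}-\hat{x}'\|$.

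The only delicate point is the indexing: one must be careful that $\mathcal{F}$ is a map on $\mathbb{R}^{Nm}$ while $F(x,\hat{x})$ involves the much larger argument $\hat{x}\in\mathbb{R}^{N^{2}m}$, so the reduction must use $N$ different auxiliary points (one per player) rather than a single evaluation of $\mathcal{F}$. Aside from keeping that bookkeeping straight, the argument is routine; no additional structure beyond Assumption \ref{ass.func} is required, and in particular the convexity and monotonicity parts of the earlier assumptions are not used here.
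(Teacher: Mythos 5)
Your proof is correct and follows essentially the same route as the paper: your auxiliary vectors $z^{i},z'^{i}$ are exactly the paper's $\check{x}_i,\check{y}_i$ (replace the $i$-th block of $\hat{x}_i$ by $x_i$), you extract the $i$-th block of $\mathcal{F}$, apply its $\theta$-Lipschitz continuity, and sum the squared blockwise bounds over $i$. No differences worth noting.
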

\begin{proof}
	Define $\hat{y}_{-i}$, $\hat{y}_{j}^i$ and $\hat{ y}$ in the same way as $\hat{x}_{-i}$, $\hat{x}_{j}^i$ and $\hat{ x}$.
    By Assumption \ref{ass.func}, we can obtain
	$\|\nabla_{x_i}J_i(x_i,{\hat x}_{-i})-\nabla_{x_i}J_i(x_i,{\hat y}_{-i})\|^2 \leq \|{F}(\check { x}_i)-{F}(\check{ y}_i)\|^2 \leq \theta^2\|\check { x}_i-\check { y}_i\|^2=\\\theta^2\|\hat{ x}_{-i}-\hat{ y}_{-i}\|^2\leq\theta^2\|\hat{ x}_{i}-\hat{ y}_{i}\|^2, \forall i\in\{1,\dots,N\}$, where $\check { x}_i=col(\hat{x}_1^i,\ldots,x_i,\ldots,\hat{x}_N^i)$ and $\check{ y}_i=col(\hat{y}_1^i,\ldots,x_i,\ldots,\hat{y}_N^i)$. Therefore, it is not hard to obtain ${\|{F}(x,\hat{ x}) -{F}(x,\hat{ y})\|}^2\\=\ \  \sum_{i=1}^N\ {\|\  \nabla_{x_i}J_i(x_i,{\hat x}_{-i})\ -\ \nabla_{x_i}J_i(x_i,{\hat y}_{-i})\ \|}^2 \leq \\\sum_{i=1}^N\theta^2{\|\hat{ x}_i-\hat{ y}_i\|}^2 =\theta^2{\|\hat{ x}-\hat{ y}\|}^2$,
	which implies that ${F}(x,\hat{ x})$ is $\theta$-Lipschitz in $\hat x$.
\end{proof}

By virtue of Lemma \ref{lem.func}, the following result is obtained.
\begin{theorem}\label{lem.con1}
	Under Assumptions  \ref{ass.G}, \ref{ass.non}, \ref{ass.fun} and \ref{ass.func}, the high-order nonlinear player \eqref{sys} with the algorithm \eqref{s.al} exponentially converges to the Nash equilibrium of the noncooperative game \eqref{q}.
\end{theorem}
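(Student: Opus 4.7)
The plan is to invoke Lemma \ref{csys.c} and recast the claim as exponential stability of the closed-loop system \eqref{csys} at its equilibrium $(x^*,0,\dots,0,y^*,\hat{x}^*)$ (with $y^*=f^*/\alpha_2$ and $\hat{x}^*=1_N\otimes x^*$ as identified in the proof of Theorem \ref{tm.equ1}). I would translate \eqref{csys} into the error coordinates $\tilde{x}=x-x^*$, $e_l=x^{(l)}$ for $l=1,\dots,n-1$, $\tilde{y}=y-y^*$, and $\tilde{\hat{x}}=\hat{x}-\hat{x}^*$. In these coordinates, Assumption \ref{ass.non} bounds $\|f-f^*\|$ linearly in $(\tilde{x},e_1,\dots,e_{n-1})$, while Lemma \ref{lem.func} and Assumption \ref{ass.func} let me split $F(x,\hat{x})=\mathcal{F}(x)+[F(x,\hat{x})-\mathcal{F}(x)]$ with a Lipschitz residue of size $\theta\|\tilde{\hat{x}}\|$. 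The argument is then driven by a composite Lyapunov function $V=\lambda_1V_1+\lambda_2V_2+\lambda_3V_3$ whose three blocks treat the decision error, the coupled physical/integrator dynamics, and the estimation error respectively.

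For $V_3$ I would re-index $\tilde{\hat{x}}$ by the player being estimated, writing $\tilde{\mathcal{X}}_j=col(\hat{x}_j^1-x_j^*,\dots,\hat{x}_j^N-x_j^*)$, so that
\begin{equation*}
\dot{\tilde{\mathcal{X}}}_j=-\alpha_3((L+M^{(j)})\otimes I_m)\tilde{\mathcal{X}}_j+\alpha_3(M^{(j)}\otimes I_m)(1_N\otimes\tilde{x}_j),
\end{equation*}
where $M^{(j)}=diag(a_{1j},\dots,a_{Nj})$ has at least one positive entry by Assumption \ref{ass.G}. Lemma \ref{lem.M} supplies a positive definite $Q^{(j)}$ with $Q^{(j)}(L+M^{(j)})+(L+M^{(j)})^TQ^{(j)}=I_N$, and taking $V_3=\sum_{j=1}^N\tilde{\mathcal{X}}_j^T(Q^{(j)}\otimes I_m)\tilde{\mathcal{X}}_j$ yields, after Young's inequality, $\dot{V}_3\le-\tfrac{\alpha_3}{2}\|\tilde{\hat{x}}\|^2+c_3\alpha_3\|\tilde{x}\|^2$ for an explicit constant $c_3$ depending on $N$ and $\max_j\|Q^{(j)}M^{(j)}\|$.

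For the physical block I would introduce a scaled derivative stack $\xi$ so that the chain dynamics of $e_1,\dots,e_{n-1}$ collapse through the Hurwitz companion matrix $A$ of \eqref{eqA}; the specific exponent pairing $\varepsilon^{n-l}$ in \eqref{s.alga} with $\varepsilon^{1-l}$ in \eqref{s.algb} is chosen exactly so that the $\tfrac{\alpha_1}{\varepsilon^{n-1}}F$ drive of $\dot{\tilde{y}}$ meshes with the $-\alpha_2\tilde{y}$ feedback entering $\dot{e}_{n-1}$. Using the positive definite $P$ of Lemma \ref{lem.cp} and a calibrated weight $\kappa$, I would set $V_2=\xi^T(P\otimes I_{Nm})\xi+\kappa\|\tilde{y}\|^2$: the identity $PA+A^TP=-I_{n-1}$ supplies the dominant negative $\xi$-term, and the cross products between $\xi$ and $\tilde{y}$ telescope into a non-positive combination. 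For $V_1=\tfrac{1}{2}\|\tilde{x}\|^2$, the genuine contraction in $\tilde{x}$ is extracted from the $\tfrac{\alpha_1}{\varepsilon^{n-1}}F(x,\hat{x})$ contribution inside $\dot{\tilde{y}}$ (after coupling a weighted $\tilde{x}^T\tilde{y}$ cross term into $V$) via the strong monotonicity $(\tilde{x})^T\mathcal{F}(x)\ge\omega\|\tilde{x}\|^2$, Lemma \ref{lem.func} being used to swap $F(x,\hat{x})$ for $\mathcal{F}(x)$ up to an $O(\|\tilde{\hat{x}}\|)$ residue that is absorbed by $V_3$.

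Assembling the three pieces and Young-absorbing every cross term against the Lipschitz constants $l_x,l_{x^{(l)}},\theta$ and the monotonicity modulus $\omega$, I expect to reach an estimate
\begin{equation*}
\dot{V}\le-c_x\|\tilde{x}\|^2-c_\xi\|\xi\|^2-c_y\|\tilde{y}\|^2-c_{\hat{x}}\|\tilde{\hat{x}}\|^2
\end{equation*}
with positive constants once $\varepsilon$ is chosen small, $\kappa$ is matched to the integrator gain, and $\alpha_1,\alpha_2,\alpha_3$ are selected sufficiently large relative to the structural constants; then $\dot{V}\le-cV$ for some $c>0$ gives exponential stability of \eqref{csys}, and Lemma \ref{csys.c} transfers this into the exponential convergence of the high-order nonlinear player to the Nash equilibrium. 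The main obstacle will be the coupling of the uncertain $f$ with the integrator state $y$: because $f$ depends on the unknown $w_i$ and can only be cancelled asymptotically by driving $\alpha_2 y\to f^*$, the proof has to demonstrate that this cancellation proceeds faster than $\tilde{x}$ drifts, a balance that simultaneously dictates the $\varepsilon$-scaling built into \eqref{s.al} and forces the off-diagonal (rather than diagonal) structure of $V_2$.
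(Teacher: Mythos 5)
Your overall architecture (reduce via Lemma \ref{csys.c}, use Lemma \ref{lem.M} for the consensus block, Lemma \ref{lem.cp} for the companion chain, strong monotonicity for the decision contraction) mirrors the paper, but there is a genuine gap in the choice of error coordinate for the estimates, and it breaks the key step. You set $\tilde{\hat{x}}=\hat{x}-\hat{x}^*=\hat{x}-1_N\otimes x^*$ and claim the splitting $F(x,\hat{x})=\mathcal{F}(x)+[F(x,\hat{x})-\mathcal{F}(x)]$ leaves ``a Lipschitz residue of size $\theta\|\tilde{\hat{x}}\|$.'' It does not: Lemma \ref{lem.func} bounds $\|F(x,\hat{x})-F(x,1_N\otimes x)\|$ by $\theta\|\hat{x}-1_N\otimes x\|\leq\theta\big(\|\tilde{\hat{x}}\|+\sqrt{N}\,\|\tilde{x}\|\big)$ in your coordinates. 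The extra $\theta\sqrt{N}\|\tilde{x}\|$ piece enters $\dot{V}$ with exactly the same prefactor (proportional to $\alpha_1/\varepsilon^{n-1}$ times your cross-weight) as the negative $-\omega\|\tilde{x}\|^2$ term you extract from strong monotonicity; since $\theta\geq\omega$ in general, no choice of $\alpha_1,\alpha_2,\alpha_3,\varepsilon,\kappa,\lambda_i$ absorbs it. Compounding this, referencing the estimator error to $x^*$ makes its dynamics driven by $\alpha_3(M^{(j)}\otimes I_m)(1_N\otimes\tilde{x}_j)$, so your own bound $\dot{V}_3\leq-\tfrac{\alpha_3}{2}\|\tilde{\hat{x}}\|^2+c_3\alpha_3\|\tilde{x}\|^2$ has the harmful term scaling linearly in $\alpha_3$: enlarging the consensus gain no longer helps, and the sandwich you need on $\lambda_3\alpha_3$ (large enough to dominate the $\theta$-residues of order $\theta\alpha_1/\varepsilon^{n-1}$, yet small enough that $\lambda_3 c_3\alpha_3$ is beaten by the $\omega$-contraction of order $\omega\alpha_1/\varepsilon^{n-1}$) is feasible only under a data restriction of the type $\omega^2\gtrsim c_3\theta^2$, which the theorem's assumptions do not provide.

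The paper's proof avoids both obstacles precisely by measuring the estimation error against the \emph{current} decisions, $\bar{\hat{x}}=\hat{x}-1_N\otimes x$. Then the residue in \eqref{eq1.xh} is exactly $\theta\|\bar{\hat{x}}\|$, and the estimator error obeys $\dot{\bar{\hat{x}}}=-\alpha_3((L\otimes I_N)+M)\bar{\hat{x}}-1_N\otimes\dot{\bar{x}}$, whose perturbation $-1_N\otimes\dot{\bar{x}}=-1_N\otimes\varepsilon\tilde{x}^{(1)}$ is independent of $\alpha_3$; hence all $\|\bar{\hat{x}}\|^2$ burdens are collected into $\rho_4=\alpha_3-\varepsilon^2-\frac{\theta\alpha_1}{2\varepsilon^{n-1}}(n-1+\frac{k_1\theta}{\omega})$ and killed by taking $\alpha_3$ large, while the $\bar{x}$-contraction survives as $-\frac{k_1\omega\alpha_1}{2\varepsilon^{n-1}}\|\bar{x}\|^2$. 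To repair your argument you should switch to this error variable (your per-player blocks $L+M^{(j)}$ with $Q^{(j)}$ from Lemma \ref{lem.M} are fine and essentially the paper's single-$Q$ construction after a permutation). Two further corrections: the gain regime is high-gain, i.e.\ $\varepsilon$ must be taken sufficiently \emph{large} with $\varepsilon^{n-1}<\alpha_2<\alpha_1<\varepsilon^{n}$ (small $\varepsilon$ cannot dominate $l_x$, $l_{x^{(l)}}$ and $N$ in $\rho_2$), and the paper uses a single Lyapunov function containing the cross terms $\tfrac12\|\tilde{x}^{(n-1)}+\bar{y}\|^2$ and $\tfrac12\|k_1\bar{x}+\sum_{i=1}^{n-2}k_{i+1}\tilde{x}^{(i)}+\tilde{x}^{(n-1)}\|^2$, which is where the $\bar{x}^T h$ and $\bar{y}$ couplings you want are actually generated.
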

\begin{proof}
	Without loss of generality, let $m=1$ for simplicity. Let
	$\bar{x} = x-x^*$, 	
	$\bar{y} = y -y^*$, 	
	$\bar{x}^{(l)} = {x}^{(l)}-{x}^{*(l)}$, 	
	$\bar{\hat { x}} = \hat{ x} - I_N\otimes x$,
	$\bar{f}(\bar{x},\ldots,\bar{x}^{(n-1)},w) = f(x,\ldots,x^{(n-1)},w) - f(x^*,\ldots,x^{*(n-1)},w)$,
	where $l \in \{1, \ldots, n-1\}$.
For simplicity, in what follows, we use $\bar f$ to represent $\bar{f}(\bar{x},\ldots,\bar{x}^{(n-1)},w)$.
	
	Combining \eqref{csys} with \eqref{equi} , we have
	\begin{subequations}\label{tsys}
		\begin{align}
			\dot {\bar x} =& {\bar{x}}^{(1)}\\
			\vdots & \nonumber\\
			{\bar{x}}^{(n)} =& \bar{f}-\sum_{l=1}^{n-1} \varepsilon^{n-l} k_l \bar{x}^{(l)} - \alpha_1 h- \alpha_2 \bar{y}\\
			\dot {\bar y} =&  \sum_{l=1}^{n-1} \varepsilon^{1-l} k_l \bar{x}^{(l)} + \frac{\alpha_1}{\varepsilon^{n-1}}h\\
			\dot {\bar{\hat{ x}}} =& -\alpha_3((L\otimes I_N) + M)\bar{\hat{ x}} -1_N\otimes \dot{\bar{x}}
		\end{align}
	\end{subequations}
	where
	$ h= {F}(x,\hat{ x}) -{F}(x^*,\hat{ x}^*)$.
	
	After the above transformation, the equilibrium point of \eqref{tsys} is the origin.
	
	Let
	\begin{align*}
		\tilde{x} =&\  col(\tilde{x}^{(1)}, \ldots, \tilde{x}^{(n-1)}) \\
		\tilde{x}^{(l)} =& \ col(\tilde{x}_1^{(l)}, \ldots, \tilde{x}_N^{(l)})
	\end{align*}
	where
	$\tilde{x}_i^{(l)} = \frac{1}{\varepsilon^{l}} {\bar x}_i^{(l)}$ with
	$l \in \{1, \ldots, n-1\}$.
	
	Thus \eqref{tsys} can be rewritten as
	\begin{subequations}\label{gsys}
		\begin{align}
			\dot {\bar x} =& {\varepsilon} \tilde{x}^{(1)}\\
			\dot {\tilde{x}} =& \varepsilon (A \otimes I_{N}) \tilde{x} + \frac{1}{\varepsilon^{n-1}} (b \otimes I_{N}) ( \bar f -\alpha_1 h - \alpha_2 \bar{y}) \\
			\dot {\bar y} =& \sum_{l=1}^{n-1} \varepsilon k_l \tilde{x}^{(l)} + \frac{\alpha_1}{\varepsilon^{n-1}}h \\
			\dot {\bar{\hat{ x}}} =& -\alpha_3((L\otimes I_N) + M)\bar{\hat{ x}} -1_N\otimes \dot{\bar{x}}
		\end{align}
	\end{subequations}
	where $b=\begin{bmatrix}
		0_{n-2}^T & 1
	\end{bmatrix}^T$ and $A$ is defined in \eqref{eqA}. Clearly, it is equivalent between \eqref{tsys} and \eqref{gsys}.
	
	Take the following Lyapunov function for \eqref{gsys}.
	\begin{align*}
		V_1
		=& \tilde{x}^T (P \otimes I_{N}) \tilde{x}
		+ {\bar{\hat{ x}}}^TQ{\bar{\hat{ x}}}
		+\frac{1}{2}\|{\tilde{x}}^{(n-1)} +\bar y\|^2 \\
		&+\frac{1}{2}\| k_1 {\bar x} +\sum_{i=1}^{n-2}  k_{i+1} {\tilde{x}} ^{(i)} + {\tilde{x}}^{(n-1)}\|^2
	\end{align*}
	where $P$ and $Q$ are positive definite matrices such that $PA +A^T P =-I_{n-1}$ (see Lemmas \ref{lem.cp}) and $Q(L
	\otimes I_N +M) + (L\otimes I_N+M)^TQ = I_{N^2}$ (see Lemma \ref{lem.M}), respectively.
	
	The derivative of $V_1$ along \eqref{gsys} is
	\begin{align*}
		{\dot V}_1 =\nonumber
		& -\alpha_3 \|\bar{\hat{ x}}\|^2
		-2\bar{\hat{ x}}^T(1_N\otimes \dot{\bar x})
		-\varepsilon \|\tilde{x}\|^2
		-\frac{\alpha_2}{\varepsilon^{n-1}} \|\bar y\|^2 \\
		\nonumber
		&+\frac{1}{\varepsilon^{n-1}} {{\bar y}^T}{\bar f}
		+\frac{k_1}{\varepsilon^{n-1}} {{\bar x}^T}({\bar f}-\alpha_1h-\alpha_2\bar y)	\\
		& +\frac{1}{\varepsilon^{n-1}} \bigg(\sum_{l=1}^{n-2} (2 p_{l(n-1)} +k_{l+1}) \tilde{x}^{(l)}\nonumber \\
		& +(2 p_{(n-1)(n-1)} +2) \tilde{x}^{(n-1)}\bigg)^T (\bar f -\alpha_2 \bar y) \nonumber \\
		& -\frac{\alpha_1}{\varepsilon^{n-1}} \bigg(\sum_{l=1}^{n-2} (2 p_{l(n-1)} +k_{l+1}) \tilde{x}^{(l)}\nonumber \\
		&  +(2 p_{(n-1)(n-1)} +1) \tilde{x}^{(n-1)}\bigg)^T h.
	\end{align*}
	
	Owing to $\mathcal{F}(x)={F}(x,1_N\otimes x)$ and $\mathcal{F}({ x}^*)={F}(x^*,{\hat x}^*)$, based on the $\omega$-strongly monotonicity of $\mathcal{F}(x)$ (see Assumption \ref{ass.func}), and the $\theta$-Lipschitz continuity of $F(x,\hat x)$ (see Lemma \ref{lem.func}), we obtain
	\begin{align}\label{eq1.xh}
		\nonumber	
		&-\frac{k_1\alpha_1}{\varepsilon^{n-1}}\bar{{ x}}^Th \\
		\nonumber	
		=& -\frac{k_1\alpha_1}{\varepsilon^{n-1}}\bar{{ x}}^T({F}(x,\hat{ x}) - {F}(x,1_N\otimes x) + \mathcal{F}(x)- \mathcal{F}({ x}^*))\\
		\leq&
		-\frac{k_1\omega\alpha_1}{2\varepsilon^{n-1}}\|\bar x\|^2 + \frac{k_1\theta^2\alpha_1}{2\omega\varepsilon^{n-1}}\|\bar{\hat{ x}}\|^2.
	\end{align}
	By Assumption \ref{ass.non}, we have
	\begin{subequations}\label{eq1.ass2}
		\begin{align}\label{eq1.yf}
			\nonumber
			\frac{1}{\varepsilon^{n-1}}{\bar y}^T\bar f 	\nonumber
			\leq&
			\frac{1}{\varepsilon^{n-1}}\|\bar y\|(l_x\|\bar x\|+\sum_{l=1}^{n-1}l_{x^{(l)}}\|{\bar x}^{(l)}\|) \\
			\leq&
			\frac{1}{2\varepsilon^{n-1}} \bigg((l_x+\sum_{l=1}^{n-1}l_{x^{(l)}}\varepsilon^{2l+1-n})\|\bar y\|^2 \nonumber \\
			&+l_x\|\bar x\|^2 + \sum_{l=1}^{n-1}l_{x^{(l)}}\varepsilon^{n-1}\|{\tilde x}^{(l)}\|^2\bigg)
		\end{align}
		\begin{align}\label{eq1.xf}
			\frac{k_1}{\varepsilon^{n-1}}{\bar x}^T\bar f \nonumber 	
			\leq&
			\frac{k_1}{\varepsilon^{n-1}}\|\bar x\|\bigg(l_x\|\bar x\|+\sum_{l=1}^{n-1}l_{x^{(l)}}\|{\bar x}^{(l)}\|\bigg) \nonumber \\
			\leq&
			\frac{k_1}{2\varepsilon^{n-1}}\bigg( (2l_x+\sum_{l=1}^{n-1}l_{x^{(l)}}\varepsilon^{2l+1-n})\|\bar x\|^2 \nonumber \\
			&+\sum_{l=1}^{n-1}l_{x^{(l)}}\varepsilon^{n-1}\|{\tilde x}^{(l)}\|^2\bigg).
		\end{align}
	\end{subequations}
	By Young's inequality, we have
	\begin{subequations}\label{eq1.young}
		\begin{align}
& -2\bar{\hat{ x}}^T(1_N\otimes \dot{\bar {x}})
			\leq \varepsilon^2\|\bar{\hat{ x}}\|^2 +N\|{\tilde x}^{(1)}\|^2 \label{eq1.hatxx} \\
& -\frac{k_1\alpha_2}{\varepsilon^{n-1}}{\bar x}^T\bar y
			\leq
			\frac{{k_1}^2\alpha_2}{2\varepsilon^{n-1}} \|\bar x\|^2
			+ \frac{\alpha_2}{2\varepsilon^{n-1}} \|\bar y\|^2 \label{eq1.xy} \\
& \bigg(\sum_{l=1}^{n-2} (2 p_{l(n-1)} +k_{l+1}) {\tilde x}^{(l)} \nonumber \\
			& ~~  +(2 p_{(n-1)(n-1)} +2) {\tilde x}^{(n-1)} \bigg)^T \bar f \nonumber \\
		  & \leq
			 \sum_{l=1}^{n-2} |2 p_{l(n-1)} +k_{l+1}|\|{\tilde x}^{(l)}\|\bigg(l_x\|\bar x\|+\sum_{l=1}^{n-1}l_{x^{(l)}}\|{\bar x}^{(l)}\|\bigg) \nonumber\\
			&\quad  +|2 p_{(n-1)(n-1)} +2| \|{\tilde x}^{(n-1)}\|\bigg(l_x\|\bar x\|+\sum_{l=1}^{n-1}l_{x^{(l)}}\|{\bar x}^{(l)}\|\bigg) \nonumber\\
			 & \leq
		\frac{1}{2}\bigg({l_x}+\sum_{l=1}^{n-1}l_{x^{(l)}}\varepsilon^{2l+1-n}\bigg)\bigg(\sum_{l=1}^{n-2} (2 p_{l(n-1)} +k_{l+1})^2 \|{\tilde x}^{(l)}\|^2 \nonumber\\
			&\quad  +(2 p_{(n-1)(n-1)} +2)^2 \|{\tilde x}^{(n-1)}\|^2 \bigg) \nonumber\\
			&\quad +\frac{n-1}{2}\bigg(l_x\|\bar x\|^2 + \sum_{l=1}^{n-1}l_{x^{(l)}}\varepsilon^{n-1} \|{\tilde x}^{(l)}\|^2\bigg) \label{eq1.pxf} \\
			&\bigg(\sum_{l=1}^{n-2} (2 p_{l(n-1)} +k_{l+1}) {\tilde x}^{(l)} \nonumber \\
			& ~~  +(2 p_{(n-1)(n-1)} +1) {\tilde x}^{(n-1)} \bigg)^T h \nonumber \\
			& \leq
			\bigg(\sum_{l=1}^{n-2} |2 p_{l(n-1)} +k_{l+1}| \|{\tilde x}^{(l)}\| \nonumber\\
			& \quad  +|2 p_{(n-1)(n-1)} +1| \|{\tilde x}^{(n-1)}\| \bigg)^T\|\bar{\hat{ x}}\| \nonumber \\
			 & \leq
			\frac{\alpha_1\theta}{2}\bigg(\sum_{l=1}^{n-2} (2 p_{l(n-1)} +k_{l+1})^2 \|{\tilde x}^{(l)}\|^2\nonumber\\
			& \quad +(2 p_{(n-1)(n-1)} +1)^2 \|{\tilde x}^{(n-1)}\|^2) + ({n-1})\|\bar{\hat{ x}}\|^2\bigg) \label{eq1.pxh} \\
			&-\alpha_2 \bigg(\sum_{l=1}^{n-2} (2 p_{l(n-1)} +k_{l+1}) {\tilde x}^{(l)} \nonumber \\
			& \qquad \quad  +(2 p_{(n-1)(n-1)} +2) {\tilde x}^{(n-1)} \bigg)^T\bar y \nonumber \\
			& \leq
		    \alpha_2\bigg(\sum_{l=1}^{n-2} |2 p_{l(n-1)} +k_{l+1}| \|{\tilde x}^{(l)}\| \nonumber\\
			& \quad +|2 p_{(n-1)(n-1)} +2| \|{\tilde x}^{(n-1)}\| \bigg)^T\|\bar y\| \nonumber \\
			& \leq
			\frac{n\alpha_2}{2}\bigg(\sum_{l=1}^{n-2} (2 p_{l(n-1)} +k_{l+1})^2 \|{\tilde x}^{(l)}\|^2\nonumber\\
			&\quad +(2 p_{(n-1)(n-1)} +2)^2 \|{\tilde x}^{(n-1)}\|^2 \bigg )+ \frac{(n-1)\alpha_2}{2n}\|\bar y\|^2. \label{eq1.pxy}
		\end{align}
	\end{subequations}	

	Combining \eqref{eq1.xh}-\eqref{eq1.young}, we have 
	\begin{align*}
		\dot{V}_1 \leq -\rho_1\|\bar x\|^2 -\rho_2\|{\tilde x}\|^2 -\rho_3\|\bar y\|^2 -\rho_4\|\bar{\hat{ x}}\|^2
	\end{align*}
	where
	\begin{align*}
		\rho_1&=\frac{1}{2\varepsilon^{n-1}}(k_1\omega\alpha_1-{k_1}^2\alpha_2-l_x(2k_1+n)\\&\qquad\qquad-\sum_{l=1}^{n-1}k_1l_{x^{(l)}}\varepsilon^{2l+1-n})\\
		\rho_2&=\frac{1}{2\varepsilon^{n-1}}(2{\varepsilon^{n}}-2N\varepsilon^{n-1}-l_{x}^{max}\varepsilon^{n-1}(n+k_1) \\&\qquad\qquad-{\bar p}(l_x+\sum_{l=1}^{n-1}l_{x^{(l)}}\varepsilon^{2l+1-n}+\theta\alpha_1+n\alpha_2))\\
		\rho_3&=\frac{1}{2\varepsilon^{n-1}}(\frac{\alpha_2}{n}-l_x-\sum_{l=1}^{n-1}l_{x^{(l)}}\varepsilon^{2l+1-n})\\
		\rho_4&=\alpha_3
		-\varepsilon^2-\frac{\theta\alpha_1}{2\varepsilon^{n-1}}(n-1+\frac{k_1\theta}{\omega})
	\end{align*}
$p_{i(n-1)}$ is the last element of the $i$th row vector of matrix $P$;
		$\bar p=  \max \{ (2p_{1(n-1)}+k_{2})^2, \ldots,(2p_{(n-2)(n-1)}+k_{n-1})^2, (2p_{(n-1)(n-1)}+2)^2 \big\}$;
		and ${l}_x^{max} = max\{l_x,l_{x^{(1)}},\dots,l_{x^{(n-1)}}\}$.

	Obviously, by choosing suitable sufficiently large positive constants $\varepsilon$, $\alpha_1$, $\alpha_2$ such that $\varepsilon^{n-1}<\alpha_2<\alpha_1<\varepsilon^n$, we have $\rho_1$, $\rho_2$, $\rho_3>0$.
	Further, $\rho_4>0$ holds by taking an appropriate $\alpha_3$.
	Hence, there exist positive constants $\varepsilon$, $\alpha_1$, $\alpha_2$ and $\alpha_3$ such that $\dot{V}_1\leq0$.
Moreover, because $V_1$ and $\dot{V}_1$ are quadratic form, the high-order nonlinear player \eqref{sys} with the algorithm \eqref{s.al} globally exponentially converges to the Nash equilibrium of the noncooperative game \eqref{q}.
\end{proof}

\subsection{Distributed Output-Based Algorithm}\label{al.ob}
When only the players' output variables are obtainable, the distributed Nash equilibrium seeking algorithm for player $i\in\mathcal{V}$ is designed as follows.
\begin{subequations}\label{o.al}
	\begin{align}
		u_i =& -\sum_{l=1}^{n-1} \varepsilon^{n-l} k_l z_i^{(l)} - \alpha_1 \nabla_{x_i}J_i(x_i,\hat{x}_{-i}) - \alpha_2 y_i \label{s2.alga}\\
		\dot {y}_i =& \sum_{l=1}^{n-1} \varepsilon^{1-l} k_l z_i^{(l)} + \frac{\alpha_1}{\varepsilon^{n-1}} \nabla_{x_i}J_i(x_i,\hat{x}_{-i}) \label{s2.algb}\\
		\dot{z}_i = & z_i^{(1)}+\frac{\varepsilon\beta_1}{\mu}(x_i-z_i)\\
		\dot{z}_i^{(1)}=& {z}_i^{(2)} + \frac{\varepsilon^2\beta_{2}}{\mu^2}(x_i-z_i)\\
		\vdots & \nonumber\\
		{z}_i^{(n)}=&  \frac{\varepsilon^n\beta_{n}}{\mu^{n}}(x_i-z_i)\\
		\dot{\hat x}_{j}^i =& -\alpha_3(\sum_{k=1}^{N}a_{ik}({\hat x}_{j}^i-{\hat x}_{j}^k) + a_{ij}({\hat x}_{j}^i-x_j)) \label{s2.algc}
	\end{align}
\end{subequations}
where $k_1,\ldots, k_{n-1}$ are  the coefficients of the characteristic polynomial $p(s)$ defined in \eqref{cp} with roots in the open LHP;
 $\beta_{1},\dots,\beta_{n}$ are the coefficients of $s^{n} +\beta_{1} s^{n-1} +\ldots +\beta_{n-1} s +\beta_{n}=0$ with roots in the open LHP; $\alpha_1$, $\alpha_2$, $\alpha_3$, $\varepsilon$ and $\mu$ are parameters (to be determined later).

The algorithm \eqref{o.al} is similar to the algorithm \eqref{s.al}, except from $z_i$ and ${z}_i^{(l)}$. In the algorithm \eqref{o.al}, $z_i$ and ${z}_i^{(l)}$ are used to estimate the state variables.

Combining \eqref{sys} with \eqref{o.al}, we obtain the following closed-loop system.
\begin{subequations}\label{csys1}
	\begin{align}
		\dot x =& {x}^{(1)}\\
		\vdots & \nonumber\\
		{x}^{(n)} =& f-\sum_{l=1}^{n-1} \varepsilon^{n-l} k_l z^{(l)} - \alpha_1 {F}(x,\hat{x})- \alpha_2 y \\
		\dot y =& \sum_{l=1}^{n-1} \varepsilon^{1-l} k_l z^{(l)} + \frac{\alpha_1}{\varepsilon^{n-1}} {F}(x,\hat{ x}) \\
		\dot{z} = & z^{(1)}+\frac{\varepsilon\beta_1}{\mu}(x-z)\\
		\dot{z}^{(1)}=& {z}^{(2)} + \frac{\varepsilon^2\beta_{2}}{\mu^2}(x-z)\\
		\vdots & \nonumber\\
		{z}^{(n)}=&  \frac{\varepsilon^n\beta_{n}}{\mu^{n}}(x-z)\\
		\dot{\hat{x}} =& -\alpha_3(((L\otimes I_N)\otimes I_m)\hat{x} + (M\otimes I_m)(\hat{ x}-1_N\otimes x))
	\end{align}
\end{subequations}
where $z=col(z_1,\dots,z_N)$ and $z^{(l)}=col(z_1^{(l)},\dots,z_N^{(l)})$ with $l\in\{1,\dots,n\}$.

There are following results about the system \eqref{csys1}.
\begin{theorem}\label{tm.eq2}
	Under Assumptions \ref{ass.G} and \ref{ass.fun}, if $(x^*,x^{*(1)},\dots,x^{*(n-1)},z^*,z^{*(1)},\dots,z^{*(n-1)},y^*,{\hat{x}}^*)$ is an equilibrium point of \eqref{csys1}, $x^*$ is a Nash equilibrium of the noncooperative game \eqref{q}. Conversely, if $x^*$ is a Nash equilibrium of the nonoperative game \eqref{q}, there exists $(x^{*(1)},\dots,x^{*(n-1)},z^*,z^{*(1)},\dots,z^{*(n-1)},y^*,{\hat{x}}^*)\in \mathbb{R}^{Nm} \times \dots\times\mathbb{R}^{Nm} \times \mathbb{R}^{Nm} \times \mathbb{R}^{Nm} \times \dots\times\mathbb{R}^{Nm} \times\mathbb{R}^{Nm} \times \mathbb{R}^{N^2m}$ such that $(x^*,x^{*(1)},\dots,x^{*(n-1)},z^*,z^{*(1)},\dots,z^{*(n-1)},y^*,{\hat{x}}^*)$ is an equilibrium point of \eqref{csys1}.
\end{theorem}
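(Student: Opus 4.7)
The plan is to follow the template of the proof of Theorem~\ref{tm.equ1}, with the additional complication that the observer variables $z,z^{(1)},\dots,z^{(n-1)}$ must be resolved at equilibrium via a bottom-up cascade argument before Lemma~\ref{lm.NE} is applied.

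For the forward direction, I would first write down the full set of equilibrium equations of~\eqref{csys1}. The integrator chain immediately yields $x^{*(l)}=0_{Nm}$ for $l\in\{1,\dots,n-1\}$. The observer equations form a cascade that unwinds from the bottom: the last one, $\frac{\varepsilon^{n}\beta_{n}}{\mu^{n}}(x^{*}-z^{*})=0_{Nm}$, forces $z^{*}=x^{*}$ (using $\beta_{n}\neq 0$, which follows from the Hurwitz condition on the polynomial $s^{n}+\beta_{1}s^{n-1}+\dots+\beta_{n}$ since its constant term is nonzero). Substituting $z^{*}=x^{*}$ into the equation $\dot z^{(n-2)}=z^{(n-1)}+\frac{\varepsilon^{n-1}\beta_{n-1}}{\mu^{n-1}}(x-z)$ at equilibrium yields $z^{*(n-1)}=0_{Nm}$, and iterating upward gives $z^{*(l)}=0_{Nm}$ for every $l\in\{1,\dots,n-1\}$. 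Once all $z^{*(l)}$ vanish, the $\dot y=0$ equation collapses to $\frac{\alpha_{1}}{\varepsilon^{n-1}}F(x^{*},\hat{x}^{*})=0_{Nm}$, hence $F(x^{*},\hat{x}^{*})=0_{Nm}$. Exactly as in Theorem~\ref{tm.equ1}, the consensus equation combined with the positive definiteness of $(L\otimes I_{N})\otimes I_{m} + M\otimes I_{m}$ from Lemma~\ref{lem.M} forces $\hat{x}^{*}=1_{N}\otimes x^{*}$, whence $\mathcal{F}(x^{*})=F(x^{*},1_{N}\otimes x^{*})=0_{Nm}$, and Lemma~\ref{lm.NE} identifies $x^{*}$ as a Nash equilibrium of~\eqref{q}.

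For the converse direction, given a Nash equilibrium $x^{*}$ of~\eqref{q}, I would construct the equilibrium tuple explicitly: take $x^{*(l)}=0_{Nm}$ and $z^{*(l)}=0_{Nm}$ for $l\in\{1,\dots,n-1\}$, $z^{*}=x^{*}$, $\hat{x}^{*}=1_{N}\otimes x^{*}$, and $y^{*}=\frac{1}{\alpha_{2}}f^{*}$ with $f^{*}=f(x^{*},\ldots,x^{*(n-1)},w)$. Direct substitution verifies every equilibrium equation of~\eqref{csys1}: the observer cascade is satisfied because $x^{*}-z^{*}=0$ and each $z^{*(l)}=0$; the consensus equation holds by the $\hat{x}^{*}=1_{N}\otimes x^{*}$ choice; the relation $F(x^{*},\hat{x}^{*})=\mathcal{F}(x^{*})=0_{Nm}$ (Lemma~\ref{lm.NE}) kills the pseudo-gradient terms in both the equation corresponding to $x^{(n)}$ and the $\dot y$ equation; and the choice of $y^{*}$ balances the remaining $f^{*}$ term.

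I do not anticipate any deep obstacle: the argument is essentially the proof of Theorem~\ref{tm.equ1} augmented by a routine inductive unwinding of the $n$ observer equations. The only place where one must pause is the bottom-to-top propagation through the observer chain, which depends only on $\beta_{n}\neq 0$ and on viewing each successive equilibrium equation as a linear equation in the next $z^{*(l)}$. No new monotonicity, Lipschitz, or Lyapunov ideas beyond those already used for Theorem~\ref{tm.equ1} are required at this stage.
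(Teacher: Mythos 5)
Your proposal is correct and follows essentially the same route as the paper: the paper's proof simply notes that at an equilibrium of \eqref{csys1} one has $z^*=x^*$ and $x^{*(l)}=z^{*(l)}=0_{Nm}$ (which your cascade argument, using $\beta_n>0$ from the Hurwitz condition, makes explicit) and then reduces both directions to the argument of Theorem \ref{tm.equ1}, exactly as you do. Your explicit construction of the equilibrium tuple in the converse direction, including $y^*=\frac{1}{\alpha_2}f^*$, matches the paper's treatment.
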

\begin{proof}
At the equilibrium point of \eqref{csys1}, we have $x^*=z^*$ and $x^{*(l)}=z^{*(l)}=0_{Nm}$, $l\in\{1,\dots,n-1\}$.
 Therefore, similarly to the proof of Theorem \ref{tm.equ1}, the conclusion is yielded.
\end{proof}

\begin{lemma}\label{lm.ec}
	The high-order nonlinear player \eqref{sys} converges to the Nash equilibrium of the nonoperative game \eqref{q} under the algorithm \eqref{o.al}, if
	the system \eqref{csys1} is stabilized to its equilibrium point.
\end{lemma}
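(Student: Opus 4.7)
The plan is to mirror the short argument used to establish Lemma \ref{csys.c}, since the statement here is the output-feedback counterpart of that result and its structure is identical. First, I would observe that plugging the output-based control input \eqref{o.al} directly into the player dynamics \eqref{sys} produces exactly the closed-loop system \eqref{csys1}; this is a mechanical substitution, so no genuine work is required at this step.

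Next, I would appeal to Theorem \ref{tm.eq2}, which already does the heavy lifting: it establishes a one-to-one correspondence between Nash equilibria $x^*$ of the noncooperative game \eqref{q} and the $x^*$-components of equilibrium points of \eqref{csys1}. In particular, every equilibrium point of \eqref{csys1} has the form $(x^*,x^{*(1)},\dots,x^{*(n-1)},z^*,z^{*(1)},\dots,z^{*(n-1)},y^*,{\hat{x}}^*)$ with $x^*$ being a Nash equilibrium of \eqref{q} (and, as shown in the proof of Theorem \ref{tm.eq2}, with $z^*=x^*$ and $x^{*(l)}=z^{*(l)}=0_{Nm}$ for $l\in\{1,\dots,n-1\}$).

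Finally, by hypothesis the trajectory of \eqref{csys1} is driven to this equilibrium point, so the $x$-component converges to $x^*$, which is precisely the Nash equilibrium of \eqref{q} by Theorem \ref{tm.eq2}. Since the high-order nonlinear player \eqref{sys} together with the control law \eqref{o.al} constitutes \eqref{csys1}, this directly yields the claimed convergence.

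There is no real obstacle in this lemma; essentially all the substantive content, including the handling of the auxiliary estimator states $z$, $z^{(l)}$, $y$ and $\hat x$, is encapsulated in Theorem \ref{tm.eq2}. The only point worth being careful about is to note that although \eqref{csys1} contains additional variables beyond those appearing in \eqref{csys}, the conclusion about $x\to x^*$ is unaffected because Theorem \ref{tm.eq2} precisely pins down the $x^*$-component of any equilibrium as a Nash equilibrium. Consequently the proof reduces to citing Theorem \ref{tm.eq2} after noting that \eqref{o.al} substituted into \eqref{sys} yields \eqref{csys1}.
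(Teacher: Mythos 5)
Your proposal is correct and follows essentially the same route as the paper: the paper proves this lemma by noting it is the exact analogue of Lemma \ref{csys.c}, i.e., substituting \eqref{o.al} into \eqref{sys} yields \eqref{csys1} and then invoking Theorem \ref{tm.eq2} to identify the equilibrium point's $x$-component with the Nash equilibrium. No gaps.
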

\begin{proof}
	Similarly to the proof of Lemma \ref{csys.c}, the conclusion can be directly deduced.
\end{proof}

With Lemma \ref{lm.ec}, we  can obtain the following result.
\begin{theorem}\label{Th.con2}
	Under Assumptions \ref{ass.G},  \ref{ass.non}, \ref{ass.fun} and \ref{ass.func}, the high-order nonlinear player \eqref{sys} with the algorithm \eqref{o.al} exponentially converges to the Nash equilibrium of the noncooperative game \eqref{q}.
\end{theorem}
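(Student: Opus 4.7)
The plan is to mirror the Lyapunov construction used in Theorem~\ref{lem.con1}, extended by a singular-perturbation/high-gain-observer term that handles the auxiliary states $z_i^{(l)}$. Since the algorithm \eqref{o.al} differs from \eqref{s.al} only in that it feeds back $z_i^{(l)}$ instead of $x_i^{(l)}$, the key new object is the observation error. First, I would move to error coordinates exactly as in the state-based proof: set $\bar x=x-x^*$, $\bar x^{(l)}=x^{(l)}-x^{*(l)}$, $\bar y=y-y^*$, $\bar{\hat x}=\hat x-1_N\otimes x$, $\tilde x^{(l)}=\bar x^{(l)}/\varepsilon^{l}$, and in addition define the observer errors $\eta^{(l)}=\bigl(\varepsilon/\mu\bigr)^{l}(x^{(l)}-z^{(l)})$ for $l=0,1,\ldots,n-1$. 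A short computation then shows that $\dot\eta=(\varepsilon/\mu)(A_o\otimes I_N)\eta+(b_o\otimes I_N)\,g$, where $A_o$ is the companion matrix of the Hurwitz polynomial $s^n+\beta_1 s^{n-1}+\cdots+\beta_n$, $b_o=[0,\ldots,0,1]^T$, and $g$ collects $f$ and the remaining lower-order terms coming from $\dot x^{(n-1)}$. By Lemma~\ref{lem.cp} (applied to $A_o$) there is $P_o\succ 0$ with $P_o A_o+A_o^T P_o=-I_n$.

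Next, I would introduce the composite Lyapunov candidate
\begin{equation*}
V_2 \;=\; V_1 \;+\; \gamma\,\eta^T(P_o\otimes I_N)\eta,
\end{equation*}
where $V_1$ is the function used in the proof of Theorem~\ref{lem.con1} (rewritten with $\tilde x^{(l)}$ replaced by the scaled \emph{estimator} state, since the control now uses $z^{(l)}$) and $\gamma>0$ is a weighting to be chosen after the other parameters. Differentiating along \eqref{csys1} and expressing $z^{(l)}=x^{(l)}-(\mu/\varepsilon)^{l}\eta^{(l)}$ in each occurrence inside $u_i$ and $\dot y_i$, every ``nominal'' cancellation carried out in Theorem~\ref{lem.con1} reproduces verbatim, and all new terms are linear in $\eta$ and can be bounded by $C\|\eta\|(\|\bar x\|+\|\tilde x\|+\|\bar y\|)$ via Young's inequality. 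The fast part of $\dot V_2$ contributes $-\gamma(\varepsilon/\mu)\|\eta\|^2+\gamma\,\eta^T(P_o b_o\otimes I_N)g$; using Assumption~\ref{ass.non} to bound $g$ by a linear combination of $\|\bar x\|,\|\tilde x\|$ this becomes $-\gamma(\varepsilon/\mu)\|\eta\|^2+\mathrm{const}\cdot(\|\bar x\|^2+\|\tilde x\|^2+\gamma^2\|\eta\|^2)$.

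Choosing parameters in order $\varepsilon,\alpha_1,\alpha_2,\alpha_3$ exactly as in the proof of Theorem~\ref{lem.con1} (so that the coefficients $\rho_1,\rho_2,\rho_3,\rho_4$ remain strictly positive), then selecting $\gamma$ small and finally $\mu$ small enough so that $\gamma\varepsilon/\mu$ dominates all cross-terms involving $\eta$, I obtain
\begin{equation*}
\dot V_2 \;\le\; -\hat\rho_1\|\bar x\|^2-\hat\rho_2\|\tilde x\|^2-\hat\rho_3\|\bar y\|^2-\hat\rho_4\|\bar{\hat x}\|^2-\hat\rho_5\|\eta\|^2
\end{equation*}
with $\hat\rho_i>0$. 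Since $V_2$ is a positive-definite quadratic form in $(\bar x,\tilde x,\bar y,\bar{\hat x},\eta)$, global exponential convergence of the closed-loop system to its equilibrium follows; by Theorem~\ref{tm.eq2} and Lemma~\ref{lm.ec} this is precisely exponential convergence of the output of \eqref{sys} to the Nash equilibrium of \eqref{q}.

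The main obstacle will be the bookkeeping of the coupling between the observer error $\eta$ and the slow dynamics: the control $u_i$ now contains $\sum_l \varepsilon^{n-l}k_l z^{(l)}$, which after the $(\varepsilon/\mu)^{l}$ rescaling produces mixed terms of order $\varepsilon^{n-l}(\mu/\varepsilon)^{l}=\mu^{l}\varepsilon^{n-2l}$, and one must verify that the ordering of the small parameters ($\mu\ll\varepsilon^{-1}\ll 1$ together with $\varepsilon^{n-1}<\alpha_2<\alpha_1<\varepsilon^n$ from Theorem~\ref{lem.con1}) actually makes every such term dominated by the negative-definite leading part. Everything else — Lipschitz bounds on $f$ via Assumption~\ref{ass.non}, strong monotonicity/Lipschitzness of the pseudo-gradient, Lemma~\ref{lem.func}, and the positive definiteness of $Q(L\otimes I_N+M)+(L\otimes I_N+M)^T Q=I_{N^2}$ from Lemma~\ref{lem.M} — carries over unchanged from the state-feedback proof.
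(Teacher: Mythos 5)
Your proposal follows essentially the same route as the paper's proof: scaled observer-error coordinates whose fast dynamics are governed by the Hurwitz matrix built from $\beta_1,\dots,\beta_n$ (the paper's $B$, with $RB+B^TR=-I_n$), a composite Lyapunov function $V_2=V_1+{}$a quadratic form in the scaled observer error, Young's-inequality bounds on all cross terms using Assumptions \ref{ass.non} and \ref{ass.func}, and the same parameter ordering ($\varepsilon,\alpha_1,\alpha_2,\alpha_3$ as in Theorem \ref{lem.con1}, then $\mu$ sufficiently small), concluding via Theorem \ref{tm.eq2} and Lemma \ref{lm.ec}. The only deviations are bookkeeping details — the paper scales the error as $\frac{1}{\varepsilon^{l}\mu^{\,n-1-l}}\bigl(z^{(l)}-x^{(l)}\bigr)$ rather than your $(\varepsilon/\mu)^{l}\bigl(x^{(l)}-z^{(l)}\bigr)$, keeps $V_1$ in the plant-state coordinates with unit weight on the observer-error term, and its forcing term also carries the $h$, $\bar y$ and $z^{(l)}$-feedback contributions — none of which changes the argument.
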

\begin{proof}
	Without loss of generality, for simplicity, let $m=1$. Transforming the equilibrium point of \eqref{csys1} into the origin, we have
	\begin{subequations}\label{tsys1}
		\begin{align}
			\dot {\bar x} =& {\bar{x}}^{(1)}\\
			\vdots & \nonumber\\
			{\bar{x}}^{(n)} =& \bar{f}-\sum_{l=1}^{n-1} \frac{\varepsilon^{n-l}k_l}{\mu^{l-n+1}}  \tilde{z}^{(l)} -\sum_{l=1}^{n-1} \varepsilon^{n-l} k_l \bar{x}^{(l)}-\alpha_1 h- \alpha_2 \bar{y}\\
			\dot {\bar y} =&  \sum_{l=1}^{n-1} \frac{\varepsilon^{1-l}k_l}{\mu^{l-n+1}}  \tilde{z}^{(l)} + \sum_{l=1}^{n-1} \varepsilon^{1-l} k_l \bar{x}^{(l)} + \frac{\alpha_1}{\varepsilon^{n-1}}h\\
			\dot{\tilde z} = & -\frac{\varepsilon\beta_1}{\mu}\tilde z + \frac{1}{\mu}\tilde z^{(1)}\\
			\dot{\tilde z}^{(1)}=&  - \frac{\varepsilon^2\beta_{2}}{\mu}\tilde z+ \frac{1}{\mu}\tilde z^{(2)}\\
			\vdots & \nonumber\\
			\tilde{z}^{(n)}=&  -\frac{\varepsilon^n\beta_{n}}{\mu}\tilde z + \sum_{l=1}^{n-1} \frac{\varepsilon^{n-l}k_l}{\mu^{l-n+1}}  \tilde{z}^{(l)} + \sum_{l=1}^{n-1} \varepsilon^{n-l} k_l \bar{x}^{(l)}\nonumber \\&- \bar{f} +\alpha_1 h + \alpha_2 \bar{y}\\
			\dot {\bar{\hat{ x}}} =& -\alpha_3((L\otimes I_N) + M)\bar{\hat{ x}} -1_N\otimes \dot{\bar{x}}
		\end{align}
	\end{subequations}
	where $\tilde z = \frac{1}{\mu^{n-1}}(z-x)$ and $\tilde z ^{(l)}=\frac{1}{\mu^{n-(l+1)}}(z^{(l)}-x^{(l)})$ with $l\in\{1,\dots,n-1\}$.
	
	Let
	\begin{align*}
		\Delta\tilde{x} =\ & col(\tilde{z},\Delta\tilde{x}^{(1)}, \ldots, \Delta\tilde{x}^{(n-1)}) \\
		\Delta\tilde{x}^{(l)} =\  & col(\Delta\tilde{x}^{(l)}_1, \ldots, \Delta\tilde{x}^{(l)}_N)
	\end{align*}
	where $\Delta\tilde{x}^{(l)}_i = \frac{1}{\varepsilon^{l}}\tilde{z}_i^{(l)}$ with $l\in\{1,\dots,n-1\}$.
	
	Then, \eqref{tsys1} can be rewritten as
	\begin{subequations}\label{ct.sys}
		\begin{align}
			\dot {\bar x} =& {\varepsilon} \tilde{x}^{(1)}\\
			\dot {\tilde{x}} =& \varepsilon (A \otimes I_{N}) \tilde{x} + \frac{1}{\varepsilon^{n-1}} (b \otimes I_{N}) ( \bar f -\alpha_1 h - \alpha_2 \bar{y}\nonumber\\
			& -\sum_{l=1}^{n-1}\frac{\varepsilon^nk_l}{\mu^{l+1-n}} \Delta\tilde{x}^{l}) \\
			\dot {\bar y} =& \sum_{l=1}^{n-1}  \varepsilon k_l \tilde{x}^{(l)} + \sum_{l=1}^{n-1}\frac{\varepsilon^nk_l}{\mu^{l+1-n}}\Delta\tilde{x}^{l}+  \frac{\alpha_1}{\varepsilon^{n-1}}h \\
			\Delta\dot{\tilde{x}} =& \frac{\varepsilon}{\mu}(B\otimes I_{N})\Delta\tilde{x} - \frac{1}{\varepsilon^{n-1}}(b_n\otimes I_N)(\bar f -\alpha_1 h - \alpha_2 \bar{y}\nonumber\\
			& -\sum_{l=1}^{n-1}\frac{\varepsilon^nk_l}{\mu^{l+1-n}}\Delta\tilde{x}^{l}-\sum_{l=1}^{n-1} \varepsilon^nk_l \tilde{x}^{(l)} )	 \\
			\dot {\bar{\hat{ x}}} =& -\alpha_3((L\otimes I_N) + M)\bar{\hat{ x}} -1_N\otimes \dot{\bar{x}}
		\end{align}
	\end{subequations}
	where 
	$\bar\mu=max\{0, k_1\mu^{n-2},\dots,k_{n-2}\mu,k_{n-1}\}$; $b_n=\begin{bmatrix}
		0_{n-1}^T & 1
	\end{bmatrix}^T$ and
	\begin{align*}
		B=\left[
		\begin{array}{ccc}
			\left[
			\begin{array}{ccc}
				-{\beta}_1 &  \ldots & -{\beta}_{n-1}
			\end{array}
			\right]^T  & \vline & I_{n-1}  \\
			\hline
			-{\beta}_n &\vline & 0_{n-1}	
		\end{array}
		\right].
	\end{align*}
	It is obviously that $B$ is Hurwitz, and hence there exists a positive definite matrix $R$ such that $RB+B^TR=-I_n$.
%

	Take the following Lyapunov function for \eqref{ct.sys}.
	\begin{align*}
		V_2=V_1+\Delta\tilde{x}^T(R\otimes I_N)\Delta\tilde{x}.
	\end{align*}
	
	The derivative of $V_2$ along \eqref{ct.sys} is
	\begin{align*}
		{\dot V}_2 =\nonumber
		& -\alpha_3 \|\bar{\hat{ x}}\|^2
		-2\bar{\hat{ x}}^T(1_N\otimes \dot{\bar x})
		-\varepsilon \|\tilde{x}\|^2  \\\nonumber
		&-\frac{\alpha_2}{\varepsilon^{n-1}} \|\bar y\|^2
		-\frac{\varepsilon}{\mu}\|\Delta\tilde{x}\|^2
		+\frac{1}{\varepsilon^{n-1}} {{\bar y}^T}{\bar f}\\	\nonumber
		&+\frac{k_1}{\varepsilon^{n-1}} {{\bar x}^T}\bigg({\bar f}-\alpha_1h-\alpha_2\bar y - \sum_{l=1}^{n-1}\frac{\varepsilon^nk_l}{\mu^{l+1-n}}\Delta\tilde{x}^{l}\bigg)	\\\nonumber
		& +\frac{1}{\varepsilon^{n-1}} \bigg(\sum_{l=1}^{n-2} (2 p_{l(n-1)} +k_{l+1}) \tilde{x}^{(l)}\\\nonumber
		& +(2 p_{(n-1)(n-1)} +2) \tilde{x}^{(n-1)}\bigg)^T (\bar f-\alpha_2 \bar y)  \\\nonumber
		&-\frac{1}{\varepsilon^{n-1}} \bigg(\sum_{l=1}^{n-2} (2 p_{l(n-1)} +k_{l+1}) \tilde{x}^{(l)}\nonumber \\
		&+(2 p_{(n-1)(n-1)} +1) \tilde{x}^{(n-1)}\bigg)^T(\alpha_1h+\sum_{l=1}^{n-1}\frac{\varepsilon^nk_l}{\mu^{l+1-n}}\Delta\tilde{x}^{l})\nonumber\\
		&-\frac{1}{\varepsilon^{n-1}} \bigg(2 r_{1n} \tilde{z}
		+\sum_{l=1}^{n-1} 2 r_{(l+1)n} \Delta\tilde{x}^{(l)}
		\bigg)^T (\bar f-\alpha_1 h-\alpha_2\bar y\nonumber \\& -\sum_{l=1}^{n-1}\frac{\varepsilon^nk_l}{\mu^{l+1-n}}\Delta\tilde{x}^{l}- \sum_{l=1}^{n-1}\varepsilon^nk_l\tilde{x}^{(l)}).
	\end{align*}

	


	
	By Assumption \ref{ass.non}, we have
	\begin{align}\label{eq2.rf}
		\nonumber
		&- \bigg(2 r_{1n} \tilde{z}
		+\sum_{l=1}^{n-1} 2 r_{(l+1)n} \Delta\tilde{x}^{(l)}
		\bigg)^T \bar f\\	\nonumber
		\leq& \frac{1}{2}\bigg(l_x+\sum_{l=1}^{n-1} l_{x^{(l)}}\varepsilon^{2l+1-n}\bigg)\bigg( 4 r_{1n}^2 \|\tilde{z}\|^2+\sum_{l=1}^{n-1} 4 r_{(l+1)n}^2 \|\Delta\tilde{x}^{(l)}\|^2\bigg) \\
		&+\frac{n}{2}\bigg(l_x\|\bar x\|^2+\sum_{l=1}^{n-1} l_{x^{(l)}}\varepsilon^{n-1} \|{\tilde x}^{(l)}\|^2\bigg).
	\end{align}
	
	Under Assumption \ref{ass.func}, we obtain
	\begin{align}\label{eq2.rh}
		\nonumber
		& \ \alpha_1\bigg(2 r_{1n} \tilde{z}
		+\sum_{l=1}^{n-1} 2 r_{(l+1)n} \Delta\tilde{x}^{(l)}
		\bigg)^T  h \\
		\leq& \frac{\theta\alpha_1}{2}\bigg(4 r_{1n}^2 \|\tilde{z}\|^2
		+\sum_{l=1}^{n-1} 4 r_{(l+1)n}^2 \|\Delta\tilde{x}^{(l)}\|^2 + n\|\bar{\hat x}\|^2\bigg).
	\end{align}
	
	Besides, we have
	\begin{subequations}\label{eq2.young}
		\begin{align}
	&-\frac{k_1\alpha_2}{\varepsilon^{n-1}}{\bar x}^T\bar y
			\leq
			\frac{{k_1}^2\alpha_2}{\varepsilon^{n-1}} \|\bar x\|^2
			+ \frac{\alpha_2}{4\varepsilon^{n-1}} \|\bar y\|^2 \label{eq2.xy} \\
	& -\sum_{l=1}^{n-1}\frac{\varepsilon k_1k_l}{\mu^{l+1-n}}{\bar x}^T\Delta\tilde{x}^{l}
			\leq \frac{nk_1\bar \mu }{2}\|\bar x\|^2 + \frac{k_1\varepsilon^2\bar \mu}{2}\|\Delta\tilde{x}\|^2 \label{eq2.xdetax} \\
			&-\alpha_2 \bigg(\sum_{l=1}^{n-2} (2 p_{l(n-1)} +k_{l+1}) {\tilde x}^{(l)} \nonumber \\
			& \qquad\quad   +(2 p_{(n-1)(n-1)} +2) {\tilde x}^{(n-1)} \bigg)^T\bar y \nonumber \\
			& \leq
			\alpha_2\bigg(\sum_{l=1}^{n-2} |2 p_{l(n-1)} +k_{l+1}| \|{\tilde x}^{(l)}\| \nonumber\\
			&\quad   +|2 p_{(n-1)(n-1)} +2| \|{\tilde x}^{(n-1)}\| \bigg)^T\|\bar y\| \nonumber \\
			&\leq
			\alpha_2\bigg({n}(\sum_{l=1}^{n-2} (2 p_{l(n-1)} +k_{l+1})^2 \|{\tilde x}^{(l)}\|^2\nonumber\\
			&\quad  \ +(2 p_{(n-1)(n-1)} +2)^2 \|{\tilde x}^{(n-1)}\|^2 )+ \frac{n-1}{4n}\|\bar y\|^2\bigg) \label{eq2.pxy} \\
					 & -\bigg(\sum_{l=1}^{n-2} (2 p_{l(n-1)} +k_{l+1}) {\tilde x}^{(l)} \nonumber \\
			& \qquad   +(2 p_{(n-1)(n-1)} +2) {\tilde x}^{(n-1)} \bigg)^T\sum_{l=1}^{n-1}\frac{\varepsilon k_l}{\mu^{l+1-n}}\Delta\tilde{x}^{(l)}\nonumber \\
			 & \leq  \frac{1}{16}\bigg(\sum_{l=1}^{n-2} (2 p_{l(n-1)} +k_{l+1})^2 \|{\tilde x}^{(l)}\|^2\nonumber\\
			&  \quad  +(2 p_{(n-1)(n-1)} +2)^2 \|{\tilde x}^{(n-1)}\|^2 \bigg)\nonumber\\
			&\quad + 4n(n-1)\varepsilon^2{\bar\mu}^2\|\Delta\tilde{x} \|^2 \label{eq2.pxdetax} \\
		&	\frac{\alpha_2}{\varepsilon^{n-1}} \bigg(2 r_{1n} \tilde{z}
			+\sum_{l=1}^{n-1} 2 r_{(l+1)n} \Delta\tilde{x}^{(l)}
			\bigg)^T  \bar y \nonumber\\
			 & \leq \frac{\alpha_2}{\varepsilon^{n-1}}\bigg(n(4 r_{1n}^2 \|\tilde{z}\|^2
			+\sum_{l=1}^{n-1} 4 r_{(l+1)n}^2 \|\Delta\tilde{x}^{(l)}\|^2 ) + \frac{1}{4}\|\bar y\|^2\bigg) \label{eq2.ry} \\
			& \varepsilon \bigg(2 r_{1n} \tilde{z}\nonumber
			+\sum_{l=1}^{n-1} 2 r_{(l+1)n} \Delta\tilde{x}^{(l)}
			\bigg)^T \sum_{l=1}^{n-1}k_l\tilde{x}^{(l)}\\
			&\leq \frac{\varepsilon}{2}\bigg(\sum_{l=1}^{n-1}2nk_l^2(4 r_{1n}^2 \|\tilde{z}\|^2
			+\sum_{l=1}^{n-1} 4 r_{(l+1)n}^2 \|\Delta\tilde{x}^{(l)}\|^2 ) +\|\tilde{x}\|^2 \bigg) \label{eq2.rx} \\
			& \bigg(2 r_{1n} \tilde{z}
			+\sum_{l=1}^{n-1} 2 r_{(l+1)n} \Delta\tilde{x}^{(l)}
			\bigg)^T \sum_{l=1}^{n-1}\frac{\varepsilon k_l}{\mu^{l+1-n}}\Delta\tilde{x}^{l} \nonumber\\
			&\leq \frac{n\varepsilon \bar \mu}{2}\bigg((4 r_{1n}^2+1) \|\tilde{z}\|^2
			+\sum_{l=1}^{n-1} (4 r_{(l+1)n}^2+1) \|\Delta\tilde{x}^{(l)}\|^2 \bigg).\label{eq2.rNx}
		\end{align}
	\end{subequations}
	
	It follows from \eqref{eq1.xh}, \eqref{eq1.ass2}, \eqref{eq1.hatxx},  \eqref{eq1.pxf}, \eqref{eq1.pxh} 
	and \eqref{eq2.rf}-\eqref{eq2.young} that
	\begin{align*}
		\dot{V}_2 \leq -\hat\rho_1\|\bar x\|^2 -\hat\rho_2\|{\tilde x}\|^2 -\hat\rho_3\|\bar y\|^2 -\hat\rho_4\|\bar{\hat{ x}}\|^2-\hat\rho_5\|\Delta\tilde{x}\|^2
	\end{align*}
	where
	\begin{align*}
		\hat\rho_1=&\frac{1}{2\varepsilon^{n-1}}(k_1\omega\alpha_1-2k_1^2\alpha_2-2l_x(n+k_1)\\
		&-\sum_{l=1}^{n-1}k_1l_{x^{(l)}}\varepsilon^{2l+1-n}-nk_1\bar\mu\varepsilon^{n-1} )\\
		\hat\rho_2=&\frac{1}{2\varepsilon^{n-1}}(\varepsilon^{n}-\varepsilon^{n-1}(\frac{\bar p}{16}+2N+(k_1+2n){l}_x^{max})\\
		&-\bar p(\theta\alpha_1 + 2n\alpha_2+l_x+\sum_{l=1}^{n-1}l_{x^{(l)}}\varepsilon^{2l+1-n}))\\
		\hat\rho_3=&\frac{1}{2\varepsilon^{n-1}}(\frac{n+1}{2n}\alpha_2 - l_x-\sum_{l=1}^{n-1}l_{x^{(l)}}\varepsilon^{2l+1-n})\\
		\hat\rho_4=&\alpha_3
		-\varepsilon^2 -\frac{\theta\alpha_1}{2\varepsilon^{n-1}}(2n-1+\frac{k_1\theta}{\omega})\\
		\hat\rho_5=&\frac{\varepsilon}{\mu}-\frac{\varepsilon^{2}\bar \mu}{2}(k_1+\frac{n\bar r}{\varepsilon}+8n(n-1)\bar\mu)\\
		&-\frac{\bar  r}{2\varepsilon^{n-1}}(\theta\alpha_1+2n\alpha_2+\sum_{l=1}^{n-1}2n\varepsilon^nk_l^2\\&+l_x+\sum_{l=1}^{n-1}l_{x^{(l)}}\varepsilon^{2l+1-n})
	\end{align*}
  $r_{in}$ is the last element of the $i$th row vector of matrix $R$; and
	$\bar r=  \max \big\{ 4r_{1n}^2+1, \ldots,4r_{nn}^2+1\big\}$.
	
	Similarly to the proof of Theorem \ref{lem.con1},
	we can take suitable $\alpha_1$, $\alpha_2$, $\alpha_3$ and $\varepsilon$ such that $\hat{\rho}_2$, $\hat{\rho}_3$, $\hat{\rho}_4>0$. Then, by selecting a small enough $\mu$, we have $\hat{\rho}_1$, $\hat{\rho}_5>0$.
	Therefore, there exist positive constants $\alpha_1$, $\alpha_2$,  $\alpha_3$, $\varepsilon$ and $\mu$ such that $\dot{V}_2\leq0$, which indicates that the high-order nonlinear player \eqref{sys} with the algorithm \eqref{o.al} exponentially converges to the Nash equilibrium of the noncooperative game \eqref{q}.
\end{proof}

\begin{remark}
Compared with the results in \cite{Frihauf2012NashES,Deng2019DistributedAF,Gadjov2019APA,Zhang2020DistributedNE}, the algorithms \eqref{s.al} and \eqref{o.al} are exponentially convergent, even in the presence of uncertain parameters.
Furthermore, the output-based algorithm \eqref{o.al} only relies on the output variables of high-order nonlinear players instead of all state variables.
\end{remark}

\begin{figurehere}
	\centering
	\begin{tikzpicture}[->,>=stealth]
		
		\tikzstyle{node1}=[shape=circle,
		fill=blue!20,
		draw=none,
		text=black,
		inner sep=2pt,
		minimum size = 16 pt,
		ball color = blue!40]
		\node[node1] (1) at (-2.8,1) {$1$};
		\node[node1] (2) at (-0.92,1) {$2$};
		\node[node1] (3) at (0.94,1) {$3$};
		\node[node1] (4) at (2.8,1) {$4$};
		\node[node1] (5) at (2.8,0) {$5$};
		\node[node1] (6) at (2.8,-1) {$6$};
		\node[node1] (7) at (0.94,-1) {$7$};
		\node[node1] (8) at (-0.92,-1) {$8$};
		\node[node1] (9) at (-2.8,-1) {$9$};
		\node[node1] (10) at (-2.8,0) {$10$};
		
		%
		\path (1) edge[->] (2)
		(2) edge[->] (3)
		(3) edge[->] (4)
		(4) edge[->] (5)
		(5) edge[->] (6)
		(6) edge[->] (7)
		(7) edge[->] (8)
		(8) edge[->] (9)
		(9) edge[->] (10)
		(10) edge[->] (1)
		(2) edge[->] (7)
		(8) edge[->] (3);
	\end{tikzpicture}
	\caption{The communication network of ten vehicles.
	}\label{fig.graphs}
	\hspace{0.1in}
\end{figurehere}

\section{Simulations}\label{sec.s}
In this section, simulation examples are given to illustrate the algorithms \eqref{s.al} and \eqref{o.al}.

\emph{Example 1:} Formation control of vehicles.

Consider the formation problem of ten vehicles over a weight-unbalanced digraph described as Fig. \ref{fig.graphs}.
The desired formation is constituted by the vehicles, if the relative positions between vehicles are the desired values.
That is, $lim_{t\rightarrow\infty}(p_i(t)-p_j(t))=d_{ij},\forall i,j\in\mathcal{V}$, where $p_i$, $p_j$ are the position of vehicles $i$ and $j$, respectively; $d_{ij}\in\mathbb{R}^m$ is the desired relative position between vehicles $i$ and $j$.
According to  \cite{Deng2021GF}, the vehicles can form the desired formation by seeking the Nash equilibrium of the following game.
\begin{align}\label{g.for}
	\min \limits_{{p_i}\in\mathbb{R}^{2}}\mathop{J_i(p_i,p_{-i})}
\end{align}
where $J_i(p_i,p_{-i}) := \frac{1}{2N}\|p_i-2d_i\|^2 + \frac{1}{N}p_i^T\sum_{j=1}^Np_j$ and $p_i:=[p_{xi}\ p_{yi}]^T$ are the cost function and the position of vehicle $i$, respectively; $d_i := [d_{xi}\ d_{yi}]^T\in\mathbb{R}^2$ and $d_j := [d_{xj}\  d_{yj}]^T\in\mathbb{R}^2$ are constant vectors such that $d_i-d_j=d_{ij}$ holds.

The vehicle $i$ has following dynamics  (see \cite{2006Vehicle}).
\begin{align*}
	\dot{p}_i &= v_i \\
	\dot{v}_i &=  - \frac{\rho A_iC_{di}}{2m_i}v_i^2 - \frac{d_{mi}}{m_i} +u_i
\end{align*}
where $v_i$ is the speed of vehicle $i$, $u_i$ is the force produced by the engine; $m_i$, $A_i$ and $C_{di}$ are the mass, the cross-sectional area, and the drag coefficient of vehicle $i$, respectively; $\rho$ is air density; and $d_{mi}$ is a constant representing the amplitude of the mechanical drag force.

The parameters of ten vehicles are given in Table \ref{tab.veh}, and $\rho=1.225m/s^3$. The desired formation is a five-pointed star.

Fig. \ref{fig.form_state_s}  shows the simulation results of algorithm \eqref{s.al} with parameters $\alpha_1=3$, $\alpha_2=2.2$, $\alpha_3=18$ and $\varepsilon=2$.
Fig. \ref{fig.form_output_s} displays the simulation results of algorithm \eqref{o.al} with parameters $\alpha_1=3$, $\alpha_2=2.2$, $\alpha_3=18$, $\varepsilon=2$ and $\mu=0.02$.
In Figs. \ref{fig.form_state_s} and \ref{fig.form_output_s}, the triangles and the dots are the initial and final positions of the vehicles, respectively, and the solid lines are the position trajectories of vehicles.
As shown in Figs. \ref{fig.form_state_s} and \ref{fig.form_output_s}, the ten vehicles converge to a five-pointed star under the algorithms \eqref{s.al} and \eqref{o.al}, respectively.
By comparing Fig. \ref{fig.form_state_s} with Fig. \ref{fig.form_output_s}, the ten vehicles under the algorithm \eqref{s.al} form the desired formation more directly than that under the algorithm \eqref{o.al}.

\begin{tablehere}
	\setlength{\abovecaptionskip}{-0cm}   
	\setlength{\belowcaptionskip}{0cm}
	\begin{center}
	\caption{The parameters of ten vehicles }\label{tab.veh}
	\begin{tabular}{cccccccccc}
	\hline
	Vehicle $i$& $m_i(kg)$ & $A_i(m^2)$ & $C_{di}$ & $d_{mi}(N)$&   \\
    \hline
	 $1$&   $1800$  &  $2.180$& $1.526$&$6.412$& \\
	 $2$&  $1775$  &  $2.165$& $1.649$&$5.241$&   \\
	 $3$& $825$  &  $1.634$& $1.052$&$2.466$&   \\
	 $4$&  $1025$  &  $1.746$& $1.281$&$3.969$&  \\
	 $5$&  $1200$   &  $1.844$& $1.359$&$4.113$&\\
	 $6$&  $1450$   &  $1.983$& $1.420$&$4.755$&  \\
	 $7$&  $970$   &  $1.715$& $1.138$&$2.842$& \\
	 $8$&  $1500$   &  $2.011$& $1.409$&$4.672$& \\
	 $9$& $1320$   &  $1.911$& $1.389$&$4.263$&  \\
	 $10$&  $1670$  &  $2.107$& $1.514$&$5.038$&   \\
	\hline
	\end{tabular}
	\end{center}
\end{tablehere}

\begin{figurehere}
	\centering
\setlength{\abovecaptionskip}{-0.2cm}
 \setlength{\belowcaptionskip}{0cm}
	\includegraphics[width=7.5cm]{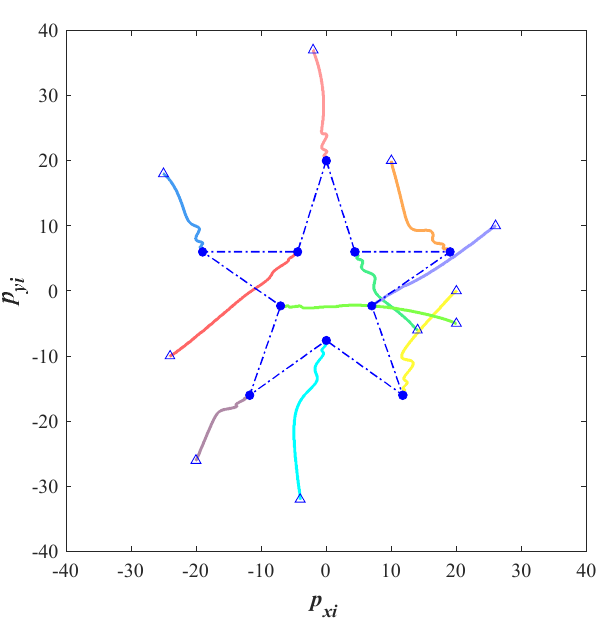}
	\caption{The  position trajectories of ten vehicles under algorithm (\ref{s.al}).}
	\label{fig.form_state_s}
\end{figurehere}

\begin{figurehere}
	\centering
\setlength{\abovecaptionskip}{-0.2cm}
 \setlength{\belowcaptionskip}{0cm}
	\includegraphics[width=7.5cm]{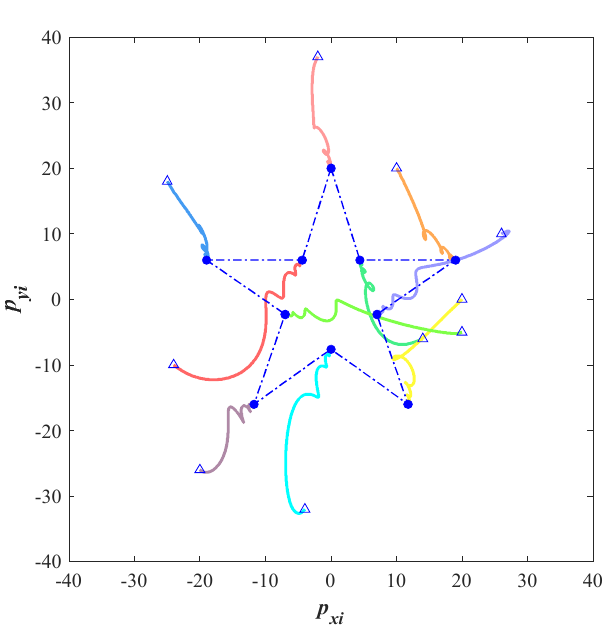}
	\caption{The position  trajectories of ten vehicles under algorithm (\ref{o.al}).}
	\label{fig.form_output_s}
\end{figurehere}

\begin{figurehere}
	\centering
	\begin{tikzpicture}[->,>=stealth]

		\tikzstyle{node1}=[shape=circle,
		fill=green!20,
		draw=none,
		text=black,
		inner sep=2pt,
		minimum size = 16 pt,
		ball color = green!40]
		\tikzstyle{node3}=[shape=circle,
		fill=red!20,
		draw=none,
		text=black,
		inner sep=2pt,
		minimum size = 16 pt,
		ball color = red!40]

		\node[node1] (1) at (-1,1) {$1$};
		\node[node1] (2) at (1,1) {$2$};
		\node[node1] (3) at (2,0) {$3$};
		\node[node1] (4) at (1,-1) {$4$};
		\node[node1] (5) at (-1,-1) {$5$};
		\node[node1] (6) at (-2,0) {$6$};

		\path (1) edge[->] (2)
		(2) edge[->] (3)
		(3) edge[->] (4)
		(4) edge[->] (5)
		(5) edge[->] (6)
		(6) edge[->] (1)
		(1) edge[->] (5)
		(4) edge[->] (2);
	\end{tikzpicture}
	\caption{The communication network of six turbine-generator systems.
	}\label{fig.graphs2}
	\hspace{0.1in}
\end{figurehere}

\begin{tablehere}
 \setlength{\abovecaptionskip}{-0cm}   
 \setlength{\belowcaptionskip}{0cm}
 \begin{center}
  \caption{The parameters of six turbine-generator systems}\label{tab.tur}
  \begin{tabular}{cccccccccc}
   \hline
   Generator  $i$& $\gamma_{i1}$& $\gamma_{i2}$ & $\gamma_{i3}$ &\\
   \hline
   $1$&   $7$  &  $36.80$& $0.27$& \\
   $2$&  $20$  &  $13.73$& $0.15$&   \\
   $3$& $60$  &  $17.14$& $0.23$& \\
   $4$&  $15$  &  $20.41$& $0.10$&  \\
   $5$&  $10$   &  $15.28$& $0.18$&\\
   $6$&  $55$   &  $14.07$& $0.32$&  \\
   \hline
  \end{tabular}
 \end{center}
\end{tablehere}

\emph{Example 2:} Electricity market games of turbine-generator systems.

In electricity markets, the competition among distributed energy resources can be described by noncooperative games (see \cite{Liu2018OptimalDC}).
Consider a noncooperative game of six turbine-generator systems over a weight-unbalanced digraph depicted as Fig. \ref{fig.graphs2}.
The turbine-generator system $i\in\mathcal{V}$ faces the following game (see \cite{Deng2019DistributedAF}).
\begin{equation*}
	\min \limits_{{P_i}\in\mathbb{R}}{J_i(P_i,P_{-i})}
\end{equation*}
where $J_i:\mathbb{R}\rightarrow\mathbb{R}$ and $P_i$ are the cost function and the output power of the turbine-generator system $i$, respectively; $P_{-i}=col(P_1,\dots,P_{i-1},P_{i+1},\dots,P_{N})$.
Specifically, the cost function of turbine-generator system $i$ is
\begin{equation*}
	J_i(P_i,P_{-i})=c_i(P_i)-p(\sigma)P_i
\end{equation*}
where $c_i(P_i):=\gamma_{i1}+\gamma_{i2}P_i+\gamma_{i3}P_i^2$ is the generation cost with $\gamma_{i1}$, $\gamma_{i2}$, $\gamma_{i3}$ being the characteristics of the generation system $i$ presented in Table \ref{tab.tur} (see \cite{Binetti2014DistributedCE}); $p(\sigma):=200- 0.1N \sigma$ is the electricity price with $\sigma(P):=\frac{1}{N}\sum_{i=1}^NP_i$.

When the valve positions of governors are fixed, the turbine-generator system $i$ have fourth-order dynamics $P_i^{(4)} = u_i$ (see \cite{Deng2021DAD,Fang2011BacksteppingbasedNA}).

\begin{figurehere}
	\centering
\setlength{\belowcaptionskip}{0.25cm}
	\includegraphics[width=9cm]{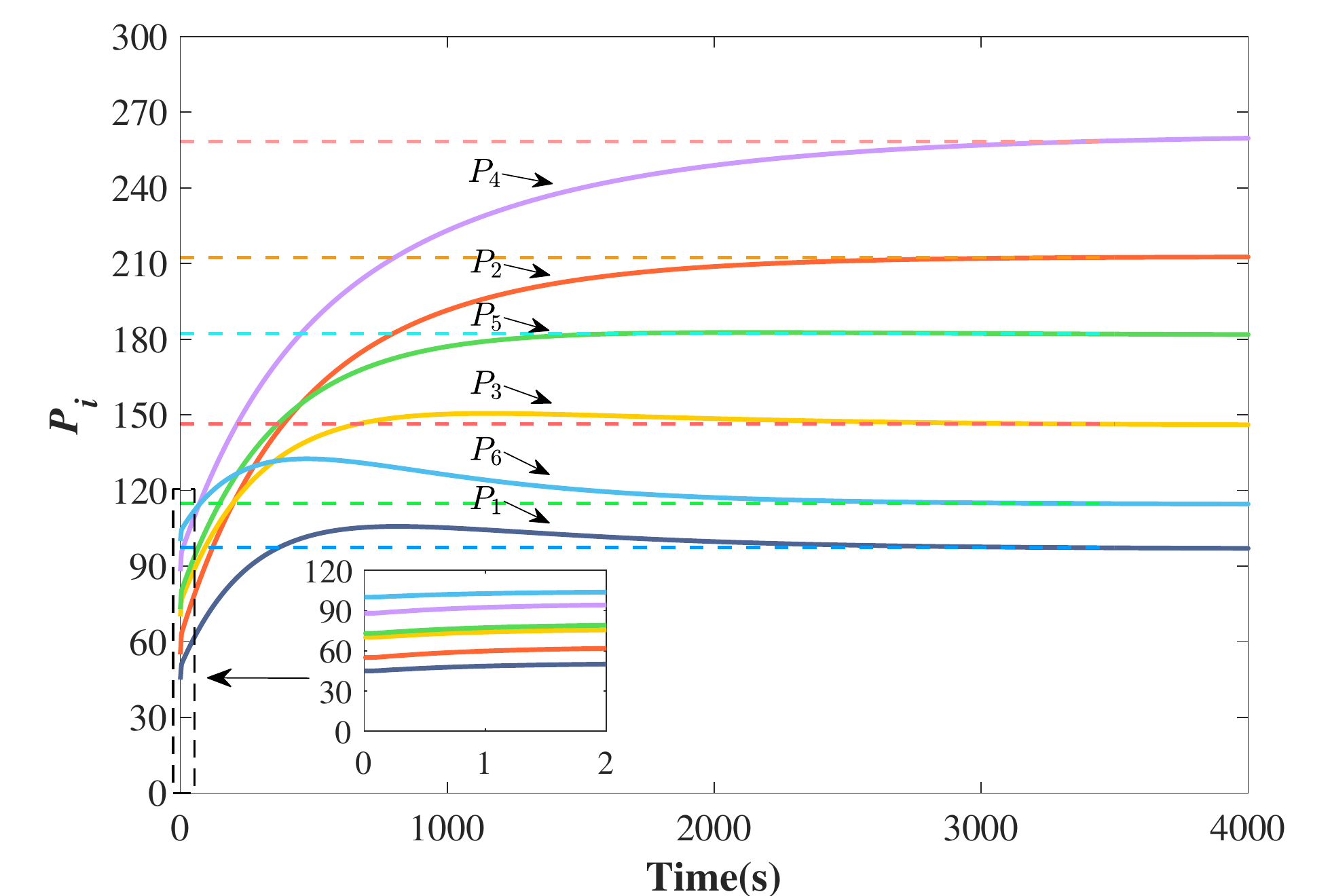}
	\caption{The  evolutions of output powers under algorithm (\ref{s.al}).}
	\label{fig.generator_state_s}
\end{figurehere}

\begin{figurehere}
	\centering
\setlength{\belowcaptionskip}{0.25cm}
	\includegraphics[width=9cm]{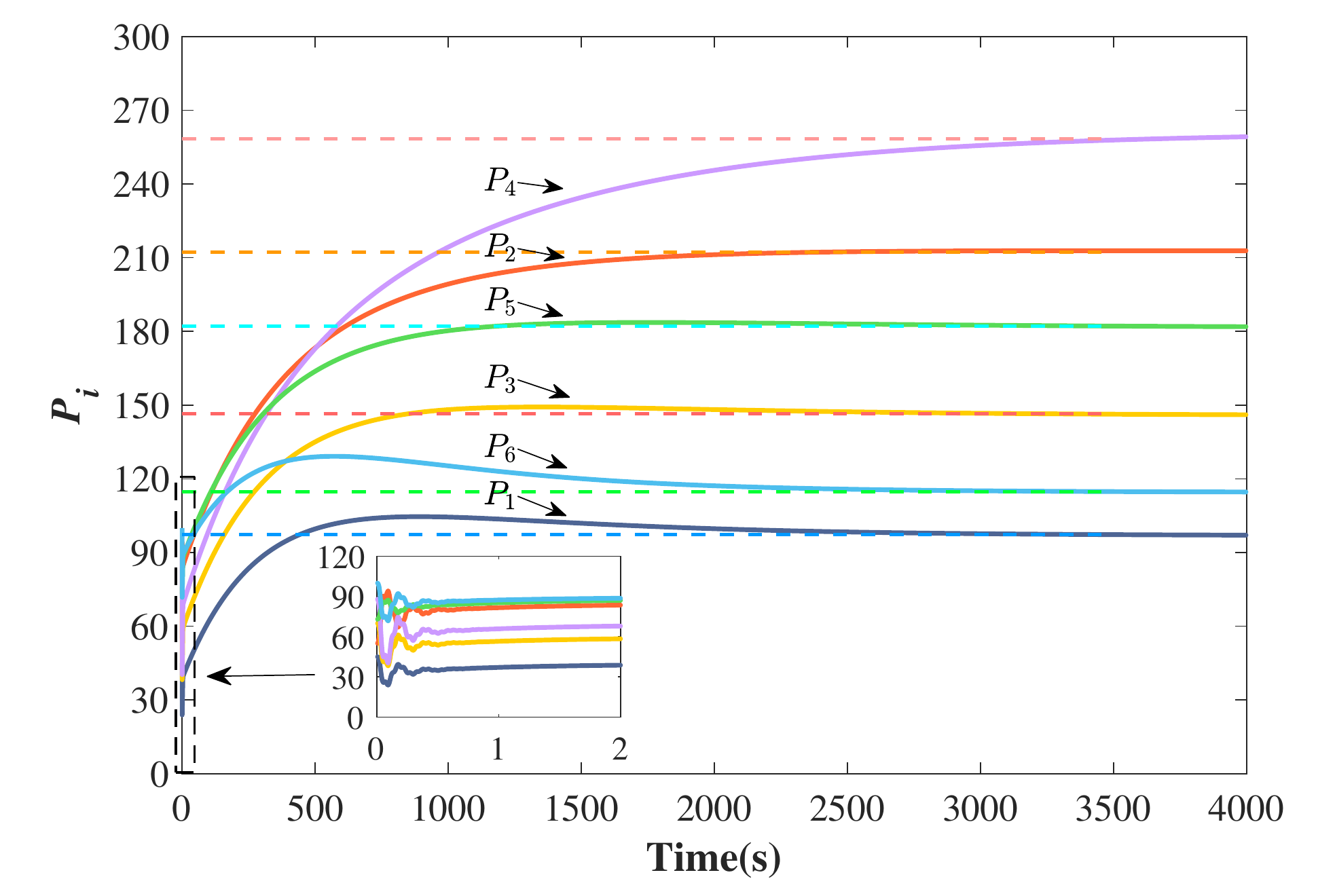}
	\caption{The  evolutions of output powers under algorithm (\ref{o.al}).}
	\label{fig.generator_output_s}
\end{figurehere}

Fig. \ref{fig.generator_state_s}  displays the simulation results of algorithm \eqref{s.al} with parameters $\alpha_1=500$, $\alpha_2=400$, $\alpha_3=400$ and $\varepsilon=20$.
Fig. \ref{fig.generator_output_s} shows the simulation results of algorithm \eqref{o.al} with parameters $\alpha_1=500$, $\alpha_2=400$, $\alpha_3=400$, $\varepsilon=20$ and $\mu=0.01$.
In  Figs. \ref{fig.generator_state_s} and \ref{fig.generator_output_s}, the solid lines and the dotted lines are the evolutions of output powers and the Nash equilibrium, respectively.
It is obvious from  Figs. \ref{fig.generator_state_s} and \ref{fig.generator_output_s} that the output powers of six turbine-generator systems converge to the Nash equilibrium under the algorithms \eqref{s.al} and \eqref{o.al}.
By comparing Fig. \ref{fig.generator_state_s} with Fig. \ref{fig.generator_output_s}, the algorithm \eqref{s.al} has better performance than the algorithm \eqref{o.al}, since the fluctuation of simulation results under algorithm \eqref{o.al} is more drastic than that under algorithm \eqref{s.al}, especially in initial stage.
These simulation results verify the effectiveness of the algorithms \eqref{s.al} and \eqref{o.al}.

\section{Conclusions}\label{sec.c}
This paper has studied the noncooperative games of multi-agent systems over weight-unbalanced digraphs, where the players have high-order uncertain nonlinear dynamics.
In order to guarantee that the high-order nonlinear players can autonomously seek the Nash equilibrium of the game, this paper has developed a distributed state-based algorithm and a distributed output-based algorithm.
The state-based algorithm uses all state variables of high-order nonlinear players, and the output-based algorithm only needs the output variables.
Moreover, the paper has analyzed the globally exponentially convergence of the two algorithms.
Finally, the two algorithms have been verified by  two examples about the formation problem of vehicles and the electricity market games of smart grids.

\footnotesize
\section*{References}

\bibliographystyle{elsarticle-num}
\bibliography{reference}

\begin{thebibliography}{10}
\expandafter\ifx\csname url\endcsname\relax
  \def\url#1{\texttt{#1}}\fi
\expandafter\ifx\csname urlprefix\endcsname\relax\def\urlprefix{URL }\fi
\expandafter\ifx\csname href\endcsname\relax
  \def\href#1#2{#2} \def\path#1{#1}\fi

\bibitem{Jensen2010AggregativeGA}
M.~K. Jensen, Aggregative games and best-reply potentials, Economic Theory
  43~(1) (2010) 45--66.

\bibitem{Gharesifard2016PriceBasedCA}
B.~Gharesifard, T.~Başar, A.~D. Dom{\'i}nguez-Garc{\'i}a, Price-based
  coordinated aggregation of networked distributed energy resources, IEEE
  Transactions on Automatic Control 61~(10) (2016) 2936--2946.

\bibitem{Deng2021DistributedSM}
Z.~Deng, Distributed algorithm design for aggregative games of {Euler-Lagrange}
  systems and its application to smart grids, IEEE transactions on Cybernetics
  8~(1) (2021) 177--186.

\bibitem{Barrera2015DynamicIF}
J.~Barrera, A.~Garc{\'i}a, Dynamic incentives for congestion control, IEEE
  Transactions on Automatic Control 60~(2) (2015) 299--310.

\bibitem{Lou2016NashEC}
Y.~Lou, Y.~Hong, L.~Xie, G.~Shi, K.~H. Johansson, Nash equilibrium computation
  in subnetwork zero-sum games with switching communications, IEEE Transactions
  on Automatic Control 61~(10) (2016) 2920--2935.

\bibitem{Ghaderi2014OpinionDI}
J.~Ghaderi, R.~Srikant, Opinion dynamics in social networks with stubborn
  agents: Equilibrium and convergence rate, Automatica 50~(12) (2014)
  3209--3215.

\bibitem{Frihauf2012NashES}
P.~Frihauf, M.~Krsti{\'c}, T.~Başar, Nash equilibrium seeking in
  noncooperative games, IEEE Transactions on Automatic Control 57~(5) (2012)
  1192--1207.

\bibitem{Facchinei2010GeneralizedNE}
F.~Facchinei, C.~Kanzow, Generalized {Nash} equilibrium problems, Annals of
  Operations Research 175 (2010) 177--211.

\bibitem{Li2020DistributedGN}
Z.~Li, Z.~Li, Z.~Ding, Distributed generalized {Nash} equilibrium seeking and
  its application to femtocell networks, IEEE transactions on Cybernetics PP
  (2020) 1--13.

\bibitem{Koshal2016DistributedAF}
J.~Koshal, A.~Nedi{\'c}, U.~V. Shanbhag, Distributed algorithms for aggregative
  games on graphs, Operations Research 64 (2016) 680--704.

\bibitem{Deng2019DistributedAF}
Z.~Deng, S.~Liang, Distributed algorithms for aggregative games of multiple
  heterogeneous {Euler-Lagrange} systems, Automatica 99 (2019) 246--252.

\bibitem{ZENG2019GNE}
X.~Zeng, J.~Chen, S.~Liang, Y.~Hong, Generalized {Nash} equilibrium seeking
  strategy for distributed nonsmooth multi-cluster game, Automatica 103 (2019)
  20--26.

\bibitem{Gadjov2019APA}
D.~Gadjov, L.~Pavel, A passivity-based approach to {Nash} equilibrium seeking
  over networks, IEEE Transactions on Automatic Control 64~(3) (2019)
  1077--1092.

\bibitem{Lu2019DistributedAF}
K.~Lu, G.~Jing, L.~Wang, Distributed algorithms for searching generalized
  {Nash} equilibrium of noncooperative games, IEEE Transactions on Cybernetics
  49~(6) (2019) 2362--2371.

\bibitem{Ye2017DistributedNE}
M.~Ye, G.~Hu, Distributed {Nash} equilibrium seeking by a consensus based
  approach, IEEE Transactions on Automatic Control 62~(9) (2017) 4811--4818.

\bibitem{Deng2021nonsmooth}
Z.~Deng, Distributed generalized {Nash} equilibrium seeking algorithm for
  nonsmooth aggregative games, Automatica 132 (2021) 109794.

\bibitem{Zhang2020DistributedNE}
Y.~Zhang, S.~Liang, X.~Wang, H.~Ji, Distributed {Nash} equilibrium seeking for
  aggregative games with nonlinear dynamics under external disturbances, IEEE
  Transactions on Cybernetics 50~(12) (2020) 4876--4885.

\bibitem{Zhu2019ContinuousTimeCA}
Y.~Zhu, W.~Yu, G.~Wen, W.~Ren, Continuous-time coordination algorithm for
  distributed convex optimization over weight-unbalanced directed networks,
  IEEE Transactions on Circuits and Systems II: Express Briefs 66~(7) (2019)
  1202--1206.

\bibitem{Kim2012CyberPhysicalSA}
K.-D. Kim, P.~R. Kumar, Cyber–physical systems: A perspective at the
  centennial, Proceedings of the IEEE 100 (2012) 1287--1308.

\bibitem{Zhang2018DCR}
X.~Zhang, A.~Papachristodoulou, N.~Li, Distributed control for reaching optimal
  steady state in network systems: An optimization approach, IEEE Transactions
  on Automatic Control 63~(3) (2018) 864--871.

\bibitem{Zhang2015DistributedPA}
Y.~Zhang, Y.~Lou, Y.~Hong, L.~Xie, Distributed projection-based algorithms for
  source localization in wireless sensor networks, IEEE Transactions on
  Wireless Communications 14~(6) (2015) 3131--3142.

\bibitem{Wang2021DOC}
Q.~Wang, J.~Chen, B.~Xin, X.~Zeng, Distributed optimal consensus for
  {Euler–Lagrange} systems based on event-triggered control, IEEE
  Transactions on Systems, Man, and Cybernetics: Systems 51~(7) (2021)
  4588--4598.

\bibitem{Zhang2017DistributedOC}
Y.~Zhang, Z.~Deng, Y.~Hong, Distributed optimal coordination for multiple
  heterogeneous {Euler-Lagrangian} systems, Automatica 79 (2017) 207--213.

\bibitem{Wang2020DistributedPA}
Q.~Wang, J.~Chen, X.~Zeng, B.~Xin, Distributed proximal‐gradient algorithms
  for nonsmooth convex optimization of second‐order multiagent systems,
  International Journal of Robust and Nonlinear Control 30 (2020) 7574--7592.

\bibitem{Deng2021DAD}
Z.~Deng, Distributed algorithm design for resource allocation problems of
  high-order multiagent systems, IEEE Transactions on Control of Network
  Systems 8~(1) (2021) 177--186.

\bibitem{Deng2021DistributedEA}
Z.~Deng, Distributed {Nash} equilibrium seeking for aggregative games with
  second-order nonlinear players, Automatica 135 (2022) 109980.

\bibitem{Romano2020DynamicNS}
A.~R. Romano, L.~Pavel, Dynamic {NE} seeking for multi-integrator networked
  agents with disturbance rejection, IEEE Transactions on Control of Network
  Systems 7~(1) (2020) 129--139.

\bibitem{Eichhorn1998TransformationsON}
R.~Eichhorn, S.~J. Linz, P.~Hnggi, Transformations of nonlinear dynamical
  systems to jerky motion and its application to minimal chaotic flows,
  Physical review. E, Statistical physics, plasmas, fluids, and related
  interdisciplinary topics 58~(6) (1998) 7151--7164.

\bibitem{2002Nonlinear}
H.~K. Khalil, Nonlinear Systems, 3rd Edition, Prentice Hall, Upper Saddle
  River, N.J., 2002.

\bibitem{2006Vehicle}
R.~Rajamani, Vehicle Dynamics and Control, Springer, Boston, MA, New York,
  2006.

\bibitem{Fang2011BacksteppingbasedNA}
F.~Fang, L.~Wei, Backstepping-based nonlinear adaptive control for coal-fired
  utility boiler–turbine units, Applied Energy 88~(3) (2011) 814--824.

\bibitem{Li2017OFD}
Y.~Li, C.~Hua, S.~Wu, X.~Guan, Output feedback distributed containment control
  for high-order nonlinear multiagent systems, IEEE Transactions on Cybernetics
  47~(8) (2017) 2032--2043.

\bibitem{Rezaee2019StationaryCC}
H.~Rezaee, F.~Abdollahi, Stationary consensus control of a class of high-order
  uncertain nonlinear agents with communication delays, IEEE Transactions on
  Systems, Man, and Cybernetics: Systems 49~(6) (2019) 1285--1290.

\bibitem{GODSIL2004Algebraic}
C.~D. Godsil, G.~Royle, Algebraic Graph Theory, Springer, New York, 2001.

\bibitem{Hu2007LeaderfollowingCO}
J.~Hu, Y.~Hong, Leader-following coordination of multi-agent systems with
  coupling time delays, Physica A-statistical Mechanics and Its Applications
  374~(2) (2007) 853--863.

\bibitem{2004Variational}
R.~T. Rockafellar, R.~J.~B. Wets, Variational Analysis, 2nd Edition, Springer,
  Berlin, 2004.

\bibitem{2003Finite}
F.~Facchinei, J.-S. Pang, Finite-Dimensional Variational Inequalities and
  Complementarity Problems, Springer, New York, 2003.

\bibitem{Ghapani19}
S.~Ghapani, S.~Rahili, W.~Ren, Distributed average tracking of physical
  second-order agents with heterogeneous unknown nonlinear dynamics without
  constraint on input signals, IEEE Transactions on Automatic Control 64~(3)
  (2019) 1178--1184.

\bibitem{Yin2011NashEP}
H.~Yin, U.~V. Shanbhag, P.~G. Mehta, {N}ash equilibrium problems with scaled
  congestion costs and shared constraints, IEEE Transactions on Automatic
  Control 56~(7) (2011) 1702--1708.

\bibitem{2010Generalized}
F.~Facchinei, C.~Kanzow, Generalized {N}ash equilibrium problems, Annals of
  Operations Research 175~(1) (2010) 177--211.

\bibitem{1998Linear}
C.-T. Chen, Linear System Theory and Design, 3rd Edition, Oxford University
  Press, New York, 1999.

\bibitem{Deng2021GF}
Z.~Deng, Game-based formation control of high-order multi-agent systems,
  submitted to IEEE Transaction on Cybernetics.

\bibitem{Liu2018OptimalDC}
Z.~Liu, Q.~Wu, S.~Huang, L.~Wang, M.~Shahidehpour, Y.~Xue, Optimal day-ahead
  charging scheduling of electric vehicles through an aggregative game model,
  IEEE Transactions on Smart Grid 9~(5) (2018) 5173--5184.

\bibitem{Binetti2014DistributedCE}
G.~Binetti, A.~Davoudi, F.~L. Lewis, D.~Naso, B.~Turchiano, Distributed
  consensus-based economic dispatch with transmission losses, IEEE Transactions
  on Power Systems 29~(4) (2014) 1711--1720.

\end{thebibliography}

%
%

%
\end{multicols}







\end{document}